\title{ \textbf{ Parallel Load Balancing on Constrained  Client-Server Topologies }}
\author{Andrea Clementi \\ {\small{}Università di Roma Tor Vergata}\\
    {\small{} Rome, Italy}\\
    {\small{}clementi@mat.uniroma2.it} 
    \and Emanuele Natale\\
    {\small{}Universit\'e C\^ote d'Azur, CNRS, INRIA}\\
    {\small{} Sophia Antipolis, France}\\
    {\small{}natale@unice.fr} 
    \and Isabella Ziccardi \\ 
    {\small{}Università dell'Aquila,}\\
    {\small{} L'Aquila, Italy}\\
    {\small{}isabella.ziccardi@graduate.univaq.it}
     }
\date{}
\begin{document}
% Isa's todo
\newcommand{\isam}[1]{\todo[color=yellow]{I: #1}}
\newcommand{\isa}[1]{\todo[inline,color=yellow]{I: #1}}
% andy's todo
\newcommand{\andym}[1]{\todo[color=purple]{A: #1}}
\newcommand{\andy}[1]{\todo[inline,color=purple]{A: #1}}
% ema's todo
\newcommand{\emam}[1]{\todo[color=pink]{E: #1}}
\newcommand{\ema}[1]{\todo[inline,color=pink]{E: #1}}

\newtheorem{definition}{Definition}
\newtheorem{prop}[definition]{Proposition}
\newtheorem{lemma}[definition]{Lemma}

\newtheorem{fact}[definition]{Fact}
\newtheorem{claim}[definition]{Claim}
\newtheorem{theorem}[definition]{Theorem}
\newtheorem{cor}[definition]{Corollary}
\newtheorem{obs}[definition]{Observation}
\newtheorem{example}[definition]{Example}
\newtheorem{exercise}[definition]{Exercise}
\newtheorem{conj}[definition]{Conjecture}
%\newfloat{pseudocode}{thb}{pseudo}
%\floatname{pseudocode}{Pseudocode}

\newcommand{\bigO}{\mathcal{O}}
\newcommand{\Prob}[2]{\mathbf{P}_{#1} \left( #2 \right)}
\newcommand{\Prc}[1]{\mathbf{Pr} \left( #1 \right)}
\newcommand{\Expec}[2]{\mathbf{E}_{#1} \left[ #2 \right]}
\newcommand{\Expcc}[1]{\mathbf{E} \left[ #1 \right]}
\newcommand{\Var}[2]{\mathrm{Var}_{#1} \left( #2 \right)}
\newcommand{\Cov}[2]{\mathrm{Cov}_{#1} \left( #2 \right)}
\newcommand{\skproof}{\noindent\textit{Sketch of Proof. }}
\newcommand{\ideaproof}{\noindent\textit{Idea of the proof. }}

\newcommand{\polylog}[1]{\mathrm{polylog}\left(#1\right)}
\newcommand{\poly}{ {\mathrm{poly}}}

% ANDY'S COMMANDS
\newcommand{\sS}{\mathcal{S}}
\newcommand{\sC}{\mathcal{C}}
\newcommand{\local}{{\sc{local}}}
\newcommand{\gossip}{{\sc{gossip}}}
\newcommand{\algo}{{\sc ALG}}
\newcommand{\ALG}{{\sc raes}}
\newcommand{\raes}{{\sc raes}} 
\newcommand{\saer}{{\sc saer}}
 \newcommand{\degC}{\Delta_{\text{min}}(\mathcal{C})}
 
 \newcommand{\degS}{\Delta_{\text{max}}(\mathcal{S})}
\newcommand{\din}{ {\mathrm{d_u^{in}}}}
\newcommand{\grado}{\ensuremath{d}}
\newcommand{\bC}{\mathbf{c}}
\newcommand{\bx}{\mathbf{x}}
\newcommand{\by}{\mathbf{y}}
\newcommand{\sM}{\mathcal{M}}
\newcommand{\SST}{\mathrm{SS}}

%ISA'S COMMAND
\newcommand{\alphaminC}{\alpha_{\text{min}}(\mathcal{C})}
\newcommand{\alphamaxC}{\alpha_{\text{max}}(\mathcal{C})}
\newcommand{\alphaminS}{\alpha_{\text{min}}(\mathcal{S})}
\newcommand{\alphamaxS}{\alpha_{\text{max}}(\mathcal{S})}
\newcommand{\degmaxC}{\Delta_{\text{max}}(\mathcal{C})}
\newcommand{\degminS}{\Delta_{\text{min}}(\mathcal{S})}
\newcommand{\hmax}{h_{\text{max}}}
\newcommand{\hmin}{h_{\text{min}}}
 \newcommand{\dout}{ {\mathrm{d_v^{out}}} }
 \newcommand{\asout}{\mathrm{A_S^{out}}}
 \newcommand{\Cost}{\mathrm{Cost}}
 \newcommand{\Rappr}{\mathrm{Rappr}}
 \newcommand{\alphamin}{\mathrm{\alpha_{min}}}
 \newcommand{\alphamax}{\mathrm{\alpha_{max}}}
 \newcommand{\Deltamin}{\mathrm{\Delta_{min}}}
 \newcommand{\Deltamax}{\mathrm{\Delta_{max}}}
 \newcommand{\cost}{ {\mathrm{Cost} }}

\newcommand{\bin}{\mathbf{Cost}}
\newcommand{\dest}{\mathrm{Dest}}
\newcommand{\sRJ}{\mathrm{Rej}} 
%\newcommand{\state}{\ensuremath{D}}

% LUCA'S COMMANDS
\newcommand{\nreq}{\ensuremath{d}} % This is d
\newcommand{\eventC}{\ensuremath{\mathcal{C}}}
\newcommand{\eventE}{\ensuremath{\mathcal{E}}}
\newcommand{\rc}{\ensuremath{rc}}
\newcommand{\rss}{\ensuremath{rss}}
\newcommand{\vol}{\ensuremath{vol}}
\newcommand{\out}{\ensuremath{\delta}}
\newcommand{\DEC}{\sc{Dec}}

% Fraction of requests 
\newcommand{\freq}{\ensuremath{\epsilon}} 
% Subset of critical nodes
\newcommand{\crit}{\ensuremath{C}}

% FRANCESCO'S COMMANDS
\renewcommand{\leq}{\leqslant}
\renewcommand{\le}{\leqslant}
\renewcommand{\geq}{\geqslant}
\renewcommand{\ge}{\geqslant}
\renewcommand{\epsilon}{\varepsilon}
\newcommand{\bX}{\mathbf{X}}
\newcommand{\bY}{\mathbf{Y}}

%pacchetto per mostrare solo le etichette utilizzate

\maketitle
\doi{10.1145/3350755.3400232}

\begin{abstract} 
 We study   parallel \emph{Load Balancing}  protocols for   a client-server distributed model defined as follows.
      There is a set $\sC$ of  $n$  clients and  a set $\sS$ of $n$  servers where  each client  has
      (at most) a  constant  number $d \geq 1$ of  requests  that  must be  assigned to some  server.  The client set and the server one are connected to each other via a fixed  bipartite graph: the requests of  client $v$ can only be  sent to the servers in its neighborhood  $N(v)$. The  goal is to  assign every client request  so as to minimize the maximum load of the servers. 
       
 In  this  setting, efficient parallel protocols are available only for   dense topolgies. In particular, a simple symmetric, non-adaptive protocol achieving constant maximum load
 has been recently introduced by Becchetti et al \cite{BCNPT18}
 for regular dense bipartite graphs.
The  parallel completion time is $\bigO(\log n)$ and the overall work is    $\bigO(n)$, w.h.p.

 Motivated by proximity constraints arising in some client-server systems, we devise  a simple  variant of Becchetti et al's protocol \cite{BCNPT18}   and we  analyse it  over  almost-regular  bipartite  graphs where nodes may  have  neighborhoods of small size.  
 In detail,   we prove that, w.h.p.,  this new version  has  a  cost equivalent to  that of   Becchetti et al's protocol (in terms of maximum load, completion time,  and work complexity, respectively) on every almost-regular bipartite graph with  degree $\Omega(\log^2n)$.

 Our analysis significantly departs from that  in \cite{BCNPT18} for the original protocol and requires to cope with   non-trivial    stochastic-dependence issues  on the random  choices   of the algorithmic process which are due to the  worst-case, sparse topology of the underlying graph.
    
\end{abstract}
%%
%% The code below is generated by the tool at http://dl.acm.org/ccs.cfm.
%% Please copy and paste the code instead of the example below.
%%

%% Keywords. The author(s) should pick words that accurately describe
%% the work being presented. Separate the keywords with commas.
%\keywords{Parallel Balanced Allocations,  Balls-into-Bins Processes, Randomized Algorithms}

%% A "teaser" image appears between the author and affiliation
%% information and the body of the document, and typically spans the
%% page.

%%
%% This command processes the author and affiliation and title
%% information and builds the first part of the formatted document.

\maketitle
% \clearpage

\section{Introduction} 

\subsection{The Framework and  our Algorithmic Goal} \label{ssec:introframe}
We study  parallel   \emph{Load-Balancing} allocation in client-server distributed systems. We have a  client-server bipartite graph $G(V=(\sC,\sS),E)$ where:       $\sC$ is the set of \emph{clients}, each one having a   number of   requests which is bounded by some constant $d \geq 1$;  $\sS$ is the set of \emph{servers};   the edge set $E$ 
represents the client-server assignments which are considered admissible because of proximity constraints (a client can send a request only to the servers in its neighborhood). 

The algorithmic goal of the entities is to assign the requests in parallel so as to minimize the  maximum server load\footnote{the load of a server is the overall number of requests which have been assigned to it. }.

To analyze the performance
of the proposed  protocol for the above distributed task, we adopt the standard synchronous distributed model introduced for parallel \emph{balls-into-bins} processes by Micah et al in \cite{ACMR98}: here,   clients and servers 
are autonomous computing entities that can  exchange information (only)  over the edges of   $G$.
   Micah et al introduce  the class of symmetric, non-adaptive protocols  and show several tight bounds   on the trade-offs between the  maximum load and the complexity (i.e. \emph{completion time} and \emph{work complexity}\footnote{The work complexity is the  overall number of exchanged messages performed by the protocol.}) of the proposed solutions. Informally, a protocol is said to be \emph{symmetric} if the entities   are anonymous,  so  all the clients (servers) act in the same way and, moreover, all possible request destinations  are chosen independently and uniformly at random.  The protocol is said to be \emph{non-adaptive} if each 
client   restricts itself to a fixed number of (possibly random)
candidate servers in its neighborhood before communication starts.
Symmetric, non-adaptive protocols have the practical merits to be easy to implement and more flexible \cite{ACMR98}. Such solutions have interesting applications in Computer Science, such as load balancing in communication networks, request scheduling and hashing \cite{AAFPW97,AHKL05,BLSZ14,R02}.

We notice that efficient  symmetric, non-adaptive
  protocols  are known (only) for  dense  regular     bipartite graphs   and   almost-tight lower bounds are known for this important class of parallel protocols \cite{ACMR98,BCNPT18,LPY19}
  (see also  Subsection \ref{ssec:previous} for a short description of such results).
  
   The main goal of this paper does not consist of    improving previous solutions with respect to specific complexity measures. Rather, still aiming at    efficient  solutions that achieve bounded  
   maximum load\footnote{According to our parameter setting, the maximum load is clearly at least $d$ and we aim at keeping  an $O(d)$ bound for it.},  we focus on symmetric, non-adaptive  Load-Balancing protocols that work  over  restricted, non dense graph topologies.
   This natural extension of previous work  is inspired by possible  network applications where: i)  based on previous experiences, a client (a server)  may decide to send (accept) the  requests only to (from) a fixed subset of \emph{trusted} servers (clients) and/or ii)
 clients and servers   are placed over a metric space so that 
 only  non-random  client-servers interactions turn out to be
  feasible because of proximity constraints. Such possible scenarios  motivated previous important studies on   sequential Load-Balancing algorithms \cite{BBFN10,G08,KP06}. 
   To the best of our knowledge,  
    efficient solutions for non-dense graphs are in fact available only for the  classic sequential 
    model. Here,     
each client request is scheduled once at  time  so that, for instance, the  well-known \emph{best-of-$k$-choices}  strategy \cite{ABKU94} can be applied:       the loads of   the servers are updated at each   assignment and  the new considered request is assigned to   a server 
that has the current minimal load out of  $k$   
servers chosen independently and  uniformly at random \cite{BBFN10,G08,KP06}.

 As for the parallel distributed model we adopt in this paper,    in \cite{BCNPT18} Becchetti et al propose a symmetric, non-adaptive algorithm, named \raes\   (for
\emph{Request a link, then Accept if Enough Space}), which is based on the well-known \emph{threshold} criterion \cite{ACMR98}. Informally, \raes\ works 
 in rounds, each consisting of two phases. Initially, each client has $d=\Theta(1)$ balls\footnote{The terms \emph{ball} and \emph{request} will be used interchangeably.}. In the first phase of each round, if  client $u$  has $d' \geq 1$ \emph{alive} balls  (i.e. to be still accepted by some server), $u$  selects  $d'$ servers independently and uniformly at random (with replacement) from $N(u)$.  It then submits each of the $d'$ balls
to each selected client. In the second phase of the
round, each server accepts all requests received in the first phase of the
current round, unless doing so would cause it to exceed the limit of $cd$
accepted balls, where the parameter $c$ is a suitable large constant; if this is the case, the server is said to be \emph{saturated} and rejects all requests it received in the first phase of the current round.  The algorithm completes when every client has     no further balls to be  submitted. 

Observe that servers only give back Boolean answers to the clients requests and, moreover,  if the algorithm terminates, the maximum load of the  servers will be at most $cd$.
Becchetti et al prove\footnote{Not related to our context, the main result in Becchetti et al    shows  that \raes\ can be used to  construct a bounded-degree expander subgraph of     
  $G$, w.h.p.} that, over any $\Delta$-regular bipartite graph with $\Delta = \Omega(n)$, \raes\ terminates within $O(\log n)$  rounds and the total work  is $\Theta(n)$, \emph{with high probability}\footnote{As usual, we say that an event $E$ holds with high probability  if   a constant $\gamma > 0$ exists such that $\mathbf{P}(E) \ge 1-n^{-\gamma}$.} (for short, \emph{w.h.p.}).

\subsection{Our Contribution}

We consider a variant of \ALG, called \saer\  (\emph{Stop Accepting if Exceeding Requests}) that works like \raes\  with the   exception that, whenever a server $v$, in the second phase of a given round, gets an overall load larger than $cd$, then $v$ rejects all requests arrived in the first phase of the current round and it becomes \emph{burned}. Once a server gets burned,  it  will never accept
  any  request for all successive rounds (see Algorithm 1 in Subsection \ref{sec:prely}). 
  
  Similarly to   \raes,   if  this new version    terminates, then     each server will have load at most $cd$ and, hence, the main technical issue is to provide a
  bound (if any) on the number of rounds required by \saer\ to let every client ball
  assigned to some server.
  
  We prove that, for any almost-regular bipartite graph $G(V=(\sC,\sS),E)$
  of   degree $\Delta = \Omega(\log^2n)$ (recall that $|\sC|=|\sS| = n$), it is possible to choose a sufficiently large constant   $c \geq 0$, such that, for any constant request number $d$,  the protocol \saer\ terminates within $O(\log n)$ rounds and  requires  $\Theta(n)$ work,
  w.h.p.
  
  Informally, for \emph{almost-regular} bipartite graphs we mean bipartite graphs where the ratio between the minimum degree of the client set and the maximum degree of the server set is bounded by an arbitrary  positive   constant (see Theorem \ref{thm:SAER-terminates} for a formal definition). Observe that this notion of almost regularity  allows a certain variance of the degrees of entities of the same type: just as a (``non-extremal'') example, we may consider a bipartite graph where:  most of the  clients have (minimal) degree $\Theta(\log^2n)$, while few  of them have degree $\Theta(\sqrt n)$;  most of the servers have (maximal) degree $\Theta(\log^2n)$, while  few of them have degree $o(\log n)$.

\medskip
\noindent{\textbf{Algorithm Analysis: An Overview.}}
In the case of dense graphs, the key-fact exploited by      Becchetti et al's analysis of the \raes\ algorithm  \cite{BCNPT18} is the following. 
Since each client
has  $\Theta(n)$   servers in its neighborhood,
it is possible to fix a sufficiently large 
constant $c$, such that, at every round, the fraction  of non-burned\footnote{Recall that a   server is burned  at round $t$ if its load is larger than  $cd$.}  servers
in the  neighborhood $N(v)$ of every client $v$ is always at least $1/2$.
Thanks to  a basic counting argument, this fact holds \emph{deterministically} and \emph{independently} of  the previous load configurations yielded 
by   the process. So,   every alive client request 
has probability at least $1/2$ to be accepted at each  round:   this allows to get a logarithmic completion time of  \raes\ on dense graphs.

In the case of non-dense  graphs (i.e. for node degree $o(n)$), the key property above does not hold deterministically: the fraction of non-burned  servers in a fixed neighborhood  is a random variable that can even take value $1$  and, very importantly, it  depends on the graph topology and on the random choices performed by the nodes during the previous rounds.
This scenario  makes the analysis considerably harder than that of the dense case.
To cope with the above issues, for an arbitrary  client $v$, we look at its server neighborhood $N(v)$ and we   establish   a clean recursive formula that describes the expected decreasing rate of the overall number $r_t(N(v))$ of requests that the neighborhood of $v$ receives at time $t$. This expectation is derived  for round $t$ by conditioning on the sequence of the maximum fractions of burned servers in any client's neighborhood produced  by the algorithmic process at rounds $1, 2, \ldots, t-1$.
It turns out that, for a sufficiently large $c$, the  conditional  expected  decreasing rate of $r_t(N(v))$ is exponential.
Then, using a coupling argument, we derive a concentration bound  for 
this    rate that holds as long as the conditional expectation of  $r_t(N(v))$ keeps  of magnitude $\Omega(\log n)$. To complete our argument, we  
consider a further (and final) stage of the process\footnote{Notice that this 
stage is   only in our analysis and not on the protocol, the latter being symmetric and non-adaptive.} that starts when $r_t(N(v)) = O(\log n)$: here,    we do not look anymore at the decreasing rate of $r_t(N(v))$, rather we show that, w.h.p,  the  fraction of burned servers in $N(v)$ can  increase, along    a time window of length $\Theta(\log n)$,  by  
an overall additive  factor  of magnitude at most  $O(1/c)$. Thanks to this fact, we can then show that the   $O(\log n)$ requests that survived the first stage  have high chances to be assigned during this last stage if the  latter lasts   $\Theta(\log n)$ additional rounds.

\medskip
\noindent{\textbf{Remark.}} We observe that, 
  while  the notion of burned server plays a crucial role  in our   analysis of  \saer,  this notion is stronger than that of saturated servers adopted by the original protocol \raes. Hence, 
  our bounds on the termination time  and the work complexity of the \saer\ protocol can be easily extended to the original Becchetti et al's protocol \raes. 

%%%%%%%%%%
 
\subsection{Previous Work} \label{ssec:previous}
Load-Balance  algorithms     
have been the subject of a long and extremely active line of research with important applications in several topics of Computer Science such as hashing, PRAM simulation, scheduling, and load balancing.
  A well-established  and effective way to model such problems is by using the classic \emph{balls-into-bins} processes. In such processes, there are typically $m$ balls that must be assigned to $n$ bins.
  In what follows, we use this framework to 
shortly describe   those previous  results which are more related to the setting of this work.

\smallskip
\noindent
\textbf{Sequential Algorithms on the Complete Bipartite Graph.} It is well-known that if   $n$ balls are thrown independently and uniformly at random into $n$ bins,  the maximum load  of a bin  
   is bounded by   $\Theta(\log n/\log\log n)$, w.h.p (see for instance \cite{MU17}).   
    Azar et al. \cite{ABKU94} proved the following breakthrough result.   Assume the balls are assigned  sequentially, one at a time and, for  each ball,   $k \geq 1$ bins are chosen independently
and uniformly at random, and   the ball is assigned to the least full bin (with ties   broken
arbitrarily). This greedy strategy is also known as ``best of $k$ choices". Then,   they prove that the final  maximum load is  
$\Theta(\log\log n/\log k + 1)$, w.h.p. A similar result  was also derived in a different version of the model by Karp et al in  \cite{KLMF92}. Berenbrink et al  extended the analysis of the Greedy algorithm for the heavily-loaded case $m>>n$ \cite{BCSV06}.
Then, several versions of this sequential algorithm have been studied 
   by considering, for instance,    non-uniform choices in  the assignment process \cite{BCM04,V03,W07}.  
   Moreover, several works addressed weighted balls \cite{BFHM08,BMFS97,KMS03}, while the  case of heterogeneous bins was studied in   \cite{W07} . Recently, balls-into-bins processes have  also been analyzed  over   game theoretic frameworks \cite{BFGGHM06,KT09}.
   
   \smallskip
\noindent
\textbf{Sequential Algorithms on Restricted Bipartite Graphs.} Sequential algorithms for restricted balls-bins (i.e. client-server)  topologies  have been considered in \cite{BBFN12,G08,KP06}: here,   each ball    $u= 1, \ldots , m$ comes with its admissible cluster of bins and decides its strategy according to the  current loads in its cluster determined by the choices of the previous balls $u' < u$.   
 In this setting,       Kenthapadi and Panigrahy \cite{KP06}  analyse the well-known sequential Greedy algorithm \cite{ABKU94}: each client $u$, in turn,  chooses a pair of servers uniformly at random from $N(u)$ and  assigns the ball to  the  server having the current minimum load. They prove that, if the size  of every $N(u)$ is at least $n^{\Theta(1/\log\log n)}$, then the Greedy algorithm achieves   maximum load $\Theta(\log\log n)$, w.h.p.
 In \cite{G08}, Godfrey analyzed the sequential Greedy algorithm on the input model where a \emph{random} cluster of  servers $N(u)$ is assigned  to each client $u$ before the algorithm starts. In more detail, each client $u$  places its  ball     in a  uniform-random server  among those  in $N(u)$ with the current fewest number of balls.
He proves that, if the random subsets $N(u)$ are chosen according to any fixed  \emph{almost-uniform}  distribution over the server set $\sS$ and the subsets $N(u)$ have size $\Omega(\log n)$, then the Greedy algorithm achieves  optimal  maximum load, w.h.p.. The overall work is $\Theta(n  \Deltamax(\sC))$, where  $\Deltamax(\sC) = \max\{|N(u)| \, : \, u \in \sC \}$.  Further bounds are determined when the overall number $m$ of balls is smaller than the size of the server set $|\sS| = n$.
Berenbrink et al \cite{BBFN12} consider the   sequential framework adopted in \cite{G08} and improve   the analysis of the greedy algorithm along different directions.  In detail, they consider weaker notions of almost-uniform distributions for the random server clusters assigned to the clients and, moreover,  they also consider  an input framework formed by  deterministic, worst-case server clusters of   size $\beta \log n$. In the case where the overall number of balls is $n < \alpha m$,  with  any  $\alpha < 1/12$ and $\beta >18$, they show that a suitable version of the sequential greedy algorithm achieves maximum load 1, w.h.p.
Notice that   the Greedy algorithm adopted in \cite{KP06,G08} does   require every  server to give information to  their clients about  its current load: in some  applications, this feature of the algorithm  might yield  critical issues in terms of privacy and security of the involved entities \cite{GP13,ZSZF10}. On the other hand, we notice that, the simple threshold approach adopted  by both Becchetti et al's Algorithm \saer\ and 
our version  \raes\ can be implemented in a fully decentralized fashion so that the clients cannot get a good approximation about the current load of the  servers (see also the remark after Algorithm 1 in Subsection \ref{ssec:theprot}).

\smallskip
\noindent
\textbf{Parallel Algorithms on the Complete Bipartite Graph.}
Inspired by applications arising from parallel distributed systems, a rich and active research  has been focused  on 
computational entities which are able to communicate each other (with some constraints that depend on the specific version of the model). Then,
protocols operate in synchronous rounds, in each of which balls and bins exchange messages once. In \cite{ACMR98}, Micah et al consider 
some non-adaptive symmetric protocols and analyze their performances in terms of maximum load, number of rounds, and
message complexity. For instance, they introduce a parallelization of the Greedy algorithm \cite{ABKU94} and show that for any constant  number $r$ rounds and     for any constant number of random choices $k$, it  achieves maximum load $\bigO((\log n/\log\log n)^{1/r})$, w.h.p. They also give a more complex Greedy algorithm that works in  $\log \log n /\log k + 2k  + \bigO(1)$ rounds and achieves $\log \log n /\log k + 2k  + O(1)$ maximum load, w.h.p. 
Interestingly enough, they   prove that the above performance trade-offs  are essentially optimal for the restricted class of non-adaptive, symmetric algorithms.
This class also includes the \emph{Threshold} algorithms where, informally speaking, at every round, every bin that receives more than a fixed threshold $T$ of balls, 
the excess balls are re-thrown in the next round (such rejected balls can be chosen in an arbitrary ``fair'' way). 
Parallel algorithms based on the threshold approach have been introduced by Lenzen et al in \cite{LPY19} for the  heavily-loaded case, i.e. when $m >>n$.  Finally, we mention   some adaptive and/or non-symmetric protocols on the complete  graph  that have 
 been presented in   recent work   (e.g. \cite{BFLS,BKSS13,LPY19,LW11})   that achieve significantly 
 better performances than  symmetric and/or non-adaptive ones  \cite{ACMR98}.
 Such     strategies are rather complex and  so    their setting is
   far from  the   aim of this paper (as discussed in the previous subsection, this being the analysis of   basic, non-adaptive symmetric protocols over
 restricted client-server topologies).

\smallskip
\noindent
\textbf{Parallel  Algorithms  on Restricted  Bipartite Graphs.}
The only rigorous analysis of parallel protocols
for restricted client-server topologies we are aware of  is  that in \cite{BCNPT18} by Becchetti et al for the \raes\ protocol which has been discussed in the previous part of this introduction.

 %%%%%%%%%%%%%%%%%%

\section{The \saer\ Protocol and the Main Theorem}

\subsection{Preliminaries} \label{sec:prely}
In the  \emph{Load-Balancing} problem we have
a system formed by a client-server bipartite graph $G(V=(\sC,\sS),E)$ where: 
the   subset $\sC = \{v_1,\ldots,v_n\}$
represents the set of \emph{clients}, the   subset
$\sS= \{u_1, \ldots , u_n\}$ represents the set of \emph{servers}, and the edge set $E$ determines,
for each client $v$, the subset $N(v)$ of servers the client  $v$ can make a \emph{request} to (i.e. it can send a \emph{ball}\footnote{Recall that the terms \emph{ball} and \emph{request} will be used interchangeably.}).
At the beginning, each client has at most $d$   balls 
where $d \geq 1$ is an arbitrary constant (w.r.t. $n$) that, in the sequel, we call \emph{request number}, and the goal  is to design a parallel distributed protocol   that assigns each ball of every client $v \in \sC$ to
one server in $N(v)$.

According to previous work \cite{ACMR98,LW11}, we study the Load-Balancing problem over the  fully-decentralized  computational model $\sM$ where bi-directional communications   take place only along the edges in $E$, in synchronous rounds. Moreover,  clients may only send the ball IDs\footnote{It suffices that each     client  keeps    a local labeling of its ball set. }, while servers  may only answer each ball request with one bit: accept/reject.
There is no global labeling of the nodes of $G$: each node $v$ just keeps a local labeling of its links.

We analyze the cost of the proposed solution with respect to two complexity measures:  the \emph{completion  time}  which is  defined as the number of rounds required by the protocol to successfully assign  all the client balls to the servers;   the (overall) \emph{work} which is defined as the overall number of exchanged messages among the nodes of the network during  the protocol's execution. 

For any node (client or server) $w \in V = (\sC,\sS)$, we denote its degree in $G$ as $\Delta_w$, i.e. $\Delta_w=|N(w)|$ and we define
\[ \degC \, = \, \min\{\Delta_v \, : \ v \in \sC \} \, \mbox{ and } \, \degS \, = \, \max\{\Delta_u\, : \ u \in \sS \} \, . \]

 \subsection{ A Simple  Protocol for Load Balancing} \label{ssec:theprot}
  As described in the introduction, 
 the protocol we propose in this paper  is a variant of the protocol \raes\ introduced in \cite{BCNPT18} and it is based on a simple, non-adaptive threshold  criterion  the servers use to accept or rejects the incoming balls. The protocol  is organized in rounds and, in turn, each round consists of two phases.  
 For the sake of readability, we consider the case where every client has exactly $d$ balls, where the request number $d$ is an arbitrary fixed constant:   the analysis of the general case ($\leq d$) is in fact similar.

\begin{algorithm}[H]
\caption{Protocol \saer($c,d$)}\label{alg:raes}
	\begin{algorithmic}[1]
		\State {\sc Phase 1:}\Comment{$\dout$: current number of the   balls  of  $v \in \sC$ which have been accepted by some server}		
			\For {$v\in \sC$}
				\State $v$ picks $d-\dout$ neighbors in $G$, independently and  uniformly at random (with replacement)
				\State $v$ submits a ball request to each of them
			\EndFor
			\State {\sc Phase 2:}
			\Comment{$\din$: current number of the balls   accepted by   $u \in \sS$}	
			\For {$u\in \sS$}
			    \If {$u$ is {\sc burned}}
			        \State $u$ rejects all the balls received in Phase 1 of the current round 
			    \Else 
				\If {$u$ received $> cd$  balls since the start of the process}
					\State $u$ rejects all the  balls received in Phase 1 of the current round and becomes   {\sc burned} 
				\Else
					\State $u$  accepts all of the balls received in Phase 1 and updates  $\din$
				\EndIf
				\EndIf
			
			\EndFor
			\For {$v\in \sC$}
			 \State{$v$ updates its value $\dout$}
			        \If {$\dout = 0$} \State  $v$ gets into the final state {\sc done} and terminates.
			        \EndIf
			\EndFor
			 	\end{algorithmic}
\end{algorithm}

\smallskip
\noindent 
\textbf{Remarks.} Some simple facts   easily follow from  the  protocol description above.
(i) The   protocol completes  at round $T \geq 1$ if and only  if every client has successfully  placed all its $d$ balls within round $T$. If this happens, then the maximum load of the servers is clearly  bounded by $cd$. The main technical question is thus  to provide   bounds in concentration  on the completion time of the protocol and on its performed work. This issue will be the subject of    the next section. \\ (ii) As for the decentralized implementation of \saer($c,d$), we observe that the knowledge of the parameter $c$ (which, in turn,  depends on the    degree of the underlying almost-regular bipartite graph - see Theorem \ref{thm:SAER-terminates} in the next subsection)  is required only by the servers while clients need no  knowledge of   global parameters. Interestingly enough, this fact implies that, for reasons of security and/or privacy,   the servers may suitably choose  $c$ so that the clients cannot get any good approximation of their current load.

\subsection{ Performance Analysis of \saer} \label{sec:analysis}
Using   the definition of    client-server 
 bipartite  graphs and    that of   Protocol \saer$(c)$ we gave in the previous subsections, we can state our  main technical contribution as follows. 
\begin{theorem}
\label{thm:SAER-terminates} 
 Let $\eta$ and $\rho \geq 1$ be two arbitrary   constants in $\mathbb{R^+}$ and let $d$ be an arbitrary constant in $\mathbb{N}$.    Let   $G((\sC,\sS),E)$ be any  bipartite graph such that 
$\Deltamin(\sC) \geq \eta  \log^2 n$ and  $\Deltamax(\sS)/ \Deltamin(\sC) \leq  \rho$. Consider the Load-Balancing problem on $G$   with  request number  $d$.     Then,  there is a sufficiently large   constant  $c>0$,\footnote{Our analysis will show that  the value of $c$  depends (only) on the constants $\eta$ and $\rho$.} such that   \saer($c,d$)\  has completion time $\mathcal{O}\left(\log n\right)$ and its   work  is $\Theta(n)$,  w.h.p.
\end{theorem}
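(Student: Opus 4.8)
The plan is to track, for each client $v$, the total number $r_t(N(v))$ of alive requests submitted to its server‑neighborhood at round $t$, and show this drops geometrically until it reaches $\Theta(\log n)$, after which a separate argument kills off the survivors. I would first set up the process with a global ``burn profile'' $\beta_t := \max_{w \in \sC} \big(\text{fraction of burned servers in } N(w)\big)$ after round $t$, and condition on the filtration generated by the entire burn history $\beta_1,\dots,\beta_{t-1}$. The key counting observation is that a server becomes burned only after receiving more than $cd$ balls, and the total number of balls ever injected into $\sS$ is at most $dn$; hence the total number of burned servers across the whole execution is at most $dn/(cd) = n/c$. By almost-regularity ($\Deltamax(\sS)/\Deltamin(\sC)\le\rho$ and $\Deltamin(\sC)\ge\eta\log^2 n$), the burned servers in any single neighborhood $N(v)$ number at most $(n/c)\cdot\rho/\Deltamin(\sC)\cdot\text{(something)}$ — more carefully, one uses a double‑counting / Markov‑type bound on how many neighborhoods can simultaneously have a large burned fraction — to conclude that, for $c$ large (depending only on $\eta,\rho$), with high probability $\beta_t \le 1/4$ for all $t$ in the relevant window. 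This is the step that replaces Becchetti et al.'s deterministic ``half the neighbors are always good'' fact.

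Granting $\beta_t\le 1/4$, each alive request of $v$ at round $t$ lands on a non‑burned server with probability $\ge 3/4$, and such a server is currently non‑saturated with decent probability as well, so each alive request is accepted with probability bounded below by an absolute constant $p_0$. This gives the recursive bound $\Expcc{r_{t+1}(N(v)) \mid \mathcal F_t} \le (1-p_0)\, r_t(N(v))$, i.e. an exponentially decaying conditional expectation. I would then boost this in‑expectation decay to a high‑probability statement by a coupling / concentration argument: couple the set of accepted requests with an independent ``each alive ball accepted w.p.\ $p_0$'' process, and apply a Chernoff bound that is valid as long as $\Expcc{r_t(N(v))} = \Omega(\log n)$. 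Since $r_1(N(v)) \le d\,\Deltamax(\sS) = O(n)$, after $O(\log n)$ rounds we reach $r_{t^\star}(N(v)) = O(\log n)$, w.h.p., and a union bound over the $n$ clients keeps this simultaneously true.

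For the final stage, starting from a neighborhood carrying only $O(\log n)$ alive balls, the geometric‑decay argument is too weak (Chernoff fails at this scale), so I would argue differently: over a further window of $\Theta(\log n)$ rounds, the total number of additional burned servers appearing in $N(v)$ is small. Indeed, only $O(\log n)$ balls from $N(v)$'s own clients plus the balls routed there from outside contribute, and again by the global $n/c$ burn budget plus almost‑regularity, the burned fraction in $N(v)$ increases by at most an additive $O(1/c)$ throughout this window, w.h.p.; so the acceptance probability per alive ball stays $\ge p_0/2$. Then each of the $O(\log n)$ survivors independently (enough) fails to be placed in $\Theta(\log n)$ rounds with probability at most $(1-p_0/2)^{\Theta(\log n)} = n^{-\Omega(1)}$, and a union bound over survivors and over clients finishes the completion‑time claim. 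The work bound is then immediate: each alive ball generates exactly one message per round, the number of alive balls is non‑increasing and starts at $dn = \Theta(n)$, and the completion time is $O(\log n)$ — but a sharper accounting (the alive‑ball count itself decays geometrically, w.h.p.) gives total work $\sum_t (\text{alive balls at round } t) = \Theta(n)$.

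I expect the main obstacle to be the first step — controlling the burn profile $\beta_t$ uniformly over all neighborhoods and all rounds in the presence of the stochastic dependence between where balls are thrown and which servers have already burned. The clean way to handle it is exactly the paper's stated device: condition on the sequence of past maximal burned fractions, derive the recursion for $\Expcc{r_t(N(v))}$ under that conditioning, and only afterwards argue that the conditioning event (all $\beta_t$ small) itself holds w.h.p.\ via the deterministic $n/c$ total‑burn budget; the delicate point is making the coupling for concentration respect this conditioning, which is why the argument must stop once the conditional expectation drops below $\Theta(\log n)$ and hand off to the separate final‑stage argument.
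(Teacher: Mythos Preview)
Your two-stage architecture and the idea of conditioning on the past burn profile both match the paper, but the step you propose for bounding the burn profile $\beta_t$ has a genuine gap. The counting claim ``total balls ever injected into $\sS$ is at most $dn$, hence at most $n/c$ servers ever burn'' is false for \saer: a server burns once its cumulative number of \emph{submissions} exceeds $cd$, and rejected balls are re-thrown in subsequent rounds, so the total number of submissions is the work $W$ (precisely what you are trying to bound), not $dn$. Even granting a correct global bound on burned servers, it would not localize: with $\Delta=\Theta(\log^2 n)\ll n$ nothing prevents the burned servers from concentrating in a single client's neighborhood, and a Markov-type ``most neighborhoods are fine'' statement is not enough---you need \emph{every} $N(v)$ to keep a constant fraction unburned. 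The same flawed budget reappears in your Stage-II argument and in your last paragraph, where you say the conditioning event is verified ``via the deterministic $n/c$ total-burn budget''.

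The paper does not separate ``bound $\beta_t$'' from ``bound $r_t(N(v))$''; the two are established together by a self-referential recursion. With $K_t(v)=\frac{1}{cd\Delta_v}\sum_{i\le t}r_i(N(v))$ and $K_t=\max_v K_t(v)$, one has $S_t\le K_t$ deterministically. Conditioning on $\{K_j\le k_j\}_{j<t}$, a coupling plus Chernoff gives $r_t(N(v))\le 2d\Delta_v\rho\prod_{j<t}k_j$ w.h.p., which feeds back into a bound on $K_t$. One then \emph{solves} for the right sequence $k_t=\gamma_t$ via the recurrence $\gamma_t=\frac{2\rho}{c}\sum_{i\le t}\prod_{j<i}\gamma_j$ and proves $\gamma_t\le\tfrac14$ for all $t$ once $c\ge 32\rho$; chaining the conditional high-probability bounds over $t$ yields $K_t\le\gamma_t$ w.h.p.\ in Stage I, and a second sequence $\delta_t=\tfrac14+O\bigl(t\log n/(cd\,\Deltamin(\sC))\bigr)$ handles Stage II. Thus $S_t\le\tfrac12$ is an \emph{output} of this recursive concentration argument, not an input from any separate counting.
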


Since the notion of burned servers adopted  in \saer \ (see Definition \ref{def:S_t(v)}) is stronger than the notion of saturated  servers adopted in the Becchetti et al's protocol
 \raes($c,d$)\  \cite{BCNPT18} (see Section \ref{ssec:introframe}), it is easy to verify that the number of accepted client requests at every round of the \raes \ process   is stochastically dominated by the same random variable in the \saer\ process. This fact implies the following result.

\begin{cor} \label{cor:raes-temination}
Under the same hypothesis of Theorem \ref{thm:SAER-terminates}, 
there is a sufficiently large constant $c$ such that
    \raes($c,d$)\  has completion time $\mathcal{O}\left(\log n\right)$ and its   work  is $\Theta(n)$,  w.h.p.
\end{cor}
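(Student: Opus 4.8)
The plan is to derive Corollary~\ref{cor:raes-temination} from Theorem~\ref{thm:SAER-terminates} by a coupling showing that \raes\ empties at least as fast as \saer. Fix $\eta,\rho,d$ and let $c$ be the constant supplied by Theorem~\ref{thm:SAER-terminates}. Run \saer($c,d$) and \raes($c,d$) on the same graph $G$ in lockstep, with shared randomness: for each client $v$ and each of its $d$ balls $b$ fix a single i.i.d.\ sequence $\sigma_b(1),\sigma_b(2),\dots$ of uniform elements of $N(v)$, and in \emph{both} processes let $b$ submit to $\sigma_b(t)$ in every round $t$ in which it is still alive. The sequence of actual Phase-1 choices of a ball in either process is a prefix of its i.i.d.\ sequence, so each coordinate of the coupling is distributed exactly as the corresponding standalone process; moreover, in a given round $t$ a ball that is alive in both processes uses the same server $\sigma_b(t)$ in both.

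The heart of the argument is to prove, by induction on $t$, that the set $A'(t)$ of balls alive after round $t$ in \raes\ is contained in the set $A(t)$ of balls alive after round $t$ in \saer. The base case $t=0$ is trivial. Assume $A'(t-1)\subseteq A(t-1)$. Then at round $t$ each server $u$ receives in \raes\ a sub-multiset of the balls it receives in \saer, so $k'_u(s)\leq k_u(s)$ for all $s\leq t$, where $k'_u(s),k_u(s)$ denote the number of balls received by $u$ at round $s$ in \raes\ and \saer\ respectively. Writing $\ell'_u(t-1)$ for $u$'s accepted load in \raes\ before round $t$, and $R_u(\cdot)$ (resp.\ $R'_u(\cdot)$) for the cumulative number of balls ever received by $u$ in \saer\ (resp.\ \raes), we have $\ell'_u(t-1)\leq R'_u(t-1)\leq R_u(t-1)$, hence
\[
\ell'_u(t-1)+k'_u(t)\ \leq\ R_u(t-1)+k_u(t)\ =\ R_u(t).
\]
Now suppose \saer\ accepts some ball $b$ at $u$ in round $t$; this forces $u$ to be unburned and $R_u(t)\leq cd$, so $\ell'_u(t-1)+k'_u(t)\leq cd$. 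Hence in \raes\ either $b$ was already placed in an earlier round, or it is still alive and — by the shared randomness — lands on the same $u$ and is accepted. In both cases $b$ lies in the set of balls placed by \raes\ through round $t$; since $b$ was an arbitrary ball placed by \saer\ through round $t$, this set contains that of \saer, i.e.\ $A'(t)\subseteq A(t)$, completing the induction.

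Granting $A'(t)\subseteq A(t)$ for all $t$, on the coupled space the completion time of \raes\ is at most that of \saer, and, since each round contributes to the message count exactly twice the number of alive balls entering it, the total work of \raes\ is at most that of \saer. By Theorem~\ref{thm:SAER-terminates}, with probability at least $1-n^{-\gamma}$ the \saer-copy has completion time $\bigO(\log n)$ and work $\Theta(n)$; on that event the \raes-copy has completion time $\bigO(\log n)$ and work $\bigO(n)$, while its work is always at least $2nd=\Omega(n)$ because all $nd$ initial balls are submitted in round~$1$. Since the \raes-marginal of the coupling coincides with the standalone \raes($c,d$) process, the corollary follows.

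I do not expect a genuine obstacle here; the only delicate point is the inductive step, because \raes\ and \saer\ differ twice over — \saer\ disables a server permanently once triggered while \raes\ blocks it only for the round in which it would overflow, and \saer's trigger counts balls \emph{received} whereas \raes's counts balls \emph{accepted}. Both differences make \saer\ reject at least as often as \raes, which is exactly the content of the displayed inequality $\ell'_u(t-1)+k'_u(t)\leq R_u(t)$; one simply has to set up the coupling (shared choice sequences, lockstep rounds) so that this inequality is available round by round.
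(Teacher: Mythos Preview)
Your proposal is correct and takes essentially the same approach as the paper, which justifies the corollary in one sentence by noting that the notion of a burned server in \saer\ is stronger than that of a saturated server in \raes, so the alive balls in \raes\ are stochastically dominated by those in \saer. You have simply made this domination precise via an explicit coupling with shared choice sequences and verified the key monotonicity $A'(t)\subseteq A(t)$ by induction; the inequality $\ell'_u(t-1)+k'_u(t)\leq R_u(t)$ is exactly the computation that turns ``burned is stronger than saturated'' into the needed round-by-round comparison.
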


  A simple counting argument implies that
  $\Deltamin(\sC) \leq \Deltamax(\sS)$ for any  bipartite graph while  Theorem \ref{thm:SAER-terminates} requires the ``almost-regularity'' hypothesis $\Deltamax(\sS)/ \Deltamin(\sC) = \Theta(1)$.   On the other hand, we emphasize that this      condition allows a relative-large variance of the node degree. 
  For instance, the theorem holds for  a  topology where:  
     the minimum client degree and the maximum server degree   are $\Theta(\log^2 n)$, some clients have degree $\Theta(\sqrt{n})$, and some servers have (minimal) degree $\Theta(1)$.

In the next section, we  will prove   Theorem \ref{thm:SAER-terminates}
in the case of $\Delta$-regular bipartite graphs then, in Appendix \ref{ssec:irregular}, 
we will show how to adapt the analysis for the
more general graphs considered in the theorem. We decided to  distinguish 
the two cases above for the sake of readability:    the regular case essentially includes all the main technical ideas of our analysis while allowing a  much simpler notation.

\section{Proof of Theorem 1: The Regular Case}\label{subsec:thm_1}

We prove here Theorem \ref{thm:SAER-terminates} for an arbitrary   
   $\Delta$-regular bipartite graph $G(V=(\sC,\sS),E)$  where $\Delta = \Delta(n)$ is any function in $\Omega(\log^2 n)$.   Since the protocol \saer\ makes a crucial  use    of
\emph{burned} servers, in what follows, we  define  this notion    and  some important  random variables of the algorithmic process which are  related to it.
For each round $t \geq 1$ and each server $u \in \sS$, let $r_t(u)$ be the random variable indicating the number of balls that server $u$ receives at time $t$.

\begin{definition} \label{def:S_t(v)}
A server  $u \in \sS$  is \emph{burned}
at round $t$ if $\sum_{i=1}^{t}r_{i}\left(u\right)\geq cd$. 
Moreover,  for any client $v \in \sC$, 
define $S_{t}\left(v\right)$
as the fraction of burned servers in the neighborhood of $v$ at time
$t$, i.e., 
\begin{equation*}
S_{t}\left(v\right)=\frac{\left|\left\{ u \in \sS :u\in N\left(v\right)\text{ and }\sum_{i=1}^t r_t(u) \geq cd\right\} \right|}{\Delta} \, .
%\label{def:s_burned}
\end{equation*}
We also  define  $S_{t}$ as the maximum fraction of burned nodes in any client's neighborhood at round $t$, i.e.,
     $S_{t} \ = \ \max_{v \in \sC}S_{t}\left(v\right)$.
\end{definition}

 The proof of Theorem  \ref{thm:SAER-terminates} relies on    the following result.
 
\begin{lemma} \label{lem:fraction_burned}  Let $\Delta \geq \eta \log^2 n$ for an arbitrary constant  $\eta$ in $\mathbb{R^+}$ and let $d$   be  an  arbitrary   constant in $\mathbb{N}$. 
Then, for any  $c \geq \max(32, 288/(d \eta))$ and for a sufficiently large $n$, with probability at least $1-1/n^{2}$,  it holds that for all rounds $t \leq 3 \log n$ the fraction of burned nodes satisfies

\begin{equation}  \label{eq:lemma_burned}
  S_t \leq \frac 12 \, .
  \end{equation}
\end{lemma}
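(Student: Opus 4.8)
The plan is to control, for every client $v$ and round $t\le 3\log n$, the total number of requests $R_t(v):=\sum_{i\le t}r_i(N(v))$ ever received by the servers of $N(v)$, and then to use the elementary counting inequality: each server burned by round $t$ contributed at least $cd$ to $R_t(v)$, so the number of burned servers in $N(v)$ is at most $R_t(v)/(cd)$, and therefore $S_t(v)>1/2$ forces $R_t(v)>cd\Delta/2$. Since $R_t(v)$ is nondecreasing in $t$, it suffices to bound $\Prc{R(v)>cd\Delta/2}$ for $R(v):=\sum_{i\le 3\log n}r_i(N(v))$ by $n^{-3}$ for each fixed $v$, and then take a union bound over the $\le n$ clients.

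To estimate $R(v)$ I would exploit the geometric decay of the number of alive balls, which is itself forced by the invariant $S_i\le 1/2$; some care is needed to avoid circularity. Introduce an auxiliary process that runs \saer\ round by round but skips round $i$ (all clients terminate) as soon as $S_{i-1}>1/2$; its first-violation time, when $\le 3\log n$, has the same law as that of the true process, so it suffices to prove the bound for the auxiliary process, in which $S_{j}\le 1/2$ holds deterministically whenever some ball is still sent at round $j+1$. In the auxiliary process, conditionally on the history $\mathcal F_{j-1}$, an alive ball of a client $w$ picks a uniform server of $N(w)$ and survives round $j$ only if it lands on a server burned at round $j$; the fraction of such servers in $N(w)$ is $S_j(w)\le 1/2$ (a minor correction is needed since this fraction is not quite $\mathcal F_{j-1}$-measurable — one removes the ball's own contribution and uses the threshold $cd-1$, losing only a lower-order term, for which the slack in the final bound amply compensates). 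Hence each alive ball survives a round with probability at most, say, $2/3$, so the number $L(w,b)$ of rounds it stays alive is stochastically dominated by a geometric variable of constant mean; and, independently in each such round, the ball lands in $N(v)$ with probability $p_w:=|N(w)\cap N(v)|/\Delta$. A coupling then gives that $R(v)$ is stochastically dominated by
\[
Z\ :=\ \sum_{(w,b):\ w\in\bigcup_{u\in N(v)}N(u)}\mathrm{Bin}\!\left(G_{w,b},\,p_w\right),
\]
a sum of \emph{independent} random variables, with the $G_{w,b}$ i.i.d.\ geometric of constant mean. Since $\sum_{w}|N(w)\cap N(v)|=\sum_{u\in N(v)}|N(u)|=\Delta^2$ (this is where $\Delta$-regularity enters), one gets $\Expcc{Z}=O(d\Delta)$.

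It remains to show $\Prc{Z>cd\Delta/2}\le n^{-3}$ for a large enough constant $c$. As $Z$ is a sum of independent, light-tailed (compound-geometric) variables with mean $O(d\Delta)$, a Chernoff-type tail bound gives $\Prc{Z>cd\Delta/2}\le\exp(-\Omega(cd\Delta))$ once $c$ exceeds an absolute constant — this is the role of the hypothesis $c\ge 32$, which makes the deviation threshold a sufficiently large multiple of the mean. Combining $\Delta\ge\eta\log^2 n$ with $c\ge 288/(d\eta)$ gives $cd\Delta\ge 288\log^2 n$, whence $\exp(-\Omega(cd\Delta))\le n^{-3}$ for all sufficiently large $n$; the union bound over the $\le n$ clients then yields overall failure probability at most $n^{-2}$, as claimed.

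The step I expect to be the main obstacle is exactly the dependency issue flagged in the paper's overview: the per-round ``survival probability $\le 1/2$'' is a statement about the future behaviour of a process whose increments are strongly dependent (balls interact through the shared servers and through the growing set of burned servers), so the geometric-decay heuristic cannot be invoked naively. The auxiliary frozen process together with the removal of each ball's own contribution are the devices that turn the conditional statement ``if the invariant has held so far, then few balls survive'' into an honest stochastic domination by a sum of independent variables; making this coupling precise — rather than the subsequent Chernoff and union-bound bookkeeping — is the heart of the argument.
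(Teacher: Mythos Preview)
Your reduction to bounding $R(v)=\sum_{i\le 3\log n}r_i(N(v))$ via the counting inequality $S_t(v)\le R_t(v)/(cd\Delta)$, together with a stopped process to break circularity, is the right starting point and matches the paper's use of $K_t$. But the step you flag as the main obstacle is a genuine gap, and the proposed fix does not close it. The ``minor correction'' is not minor: a ball $(w,b)$ is rejected in round $j$ iff the server it hits has load from all \emph{other} requests (including the other round-$j$ requests) already $\ge cd-1$, so the conditional rejection probability is the fraction $\hat S_j(w,b)$ of servers in $N(w)$ with other-load $\ge cd-1$. This quantity is not controlled by $S_j$: if many servers sit exactly at other-load $cd-1$ then $\hat S_j(w,b)$ can equal $1$ while $S_j(w)=O(1/\Delta)$. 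What bounds $\hat S_j(w,b)$ is the \emph{total} received load in $N(w)$ (namely $\hat S_j(w,b)\le R_j(w)/((cd-1)\Delta)\approx K_j(w)$), which is exactly what you are trying to control; freezing on $S_{j-1}\le 1/2$ does not deliver a per-round survival bound, and even freezing on $K_{j-1}\le 1/2$ would not, because $\hat S_j$ also absorbs the round-$j$ requests of all other balls, unbounded at the start of round $j$. Consequently the stochastic domination of $R(v)$ by $Z=\sum_{(w,b)}\mathrm{Bin}(G_{w,b},p_w)$ with \emph{independent} geometrics does not follow: you would need $\hat S_j(w,b)\le 2/3$ to hold deterministically in the frozen process, and it does not; without that, the ball lifetimes remain positively correlated through the shared burned/near-burned set, and positive correlation makes the upper tail of $R(v)$ \emph{heavier} than that of $Z$, not lighter.

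The paper sidesteps both difficulties by a round-by-round induction on $K_t=\frac{1}{cd\Delta}\sum_{i\le t}r_i$. Conditioning on $\{K_1\le k_1,\ldots,K_{t-1}\le k_{t-1}\}$, the alive indicators $a_t^{(i)}(w)$ are $\mathcal F_{t-1}$-measurable (no round-$t$ interaction enters), and the paper shows they can be \emph{jointly} dominated by i.i.d.\ Bernoulli$\big(\prod_{j<t}k_j\big)$ variables (Lemma~\ref{claim:chernoff} and the coupling in the appendix); combined with the independent round-$t$ destinations this yields a Chernoff bound on $r_t(N(v))$ alone. The induction is then closed in two regimes: a geometrically decaying $k_t=\gamma_t$ while the conditional mean of $r_t(N(v))$ is $\Omega(\log n)$ (Lemma~\ref{lem:caseI}), and afterwards a slowly increasing $k_t=\delta_t\le 1/2$ for the remaining $O(\log n)$ rounds, where one only needs $r_t(N(v))=O(\log n)$ per round and $\Delta=\Omega(\log^2 n)$ absorbs the accumulated $O(\log^2 n)$ load (Lemma~\ref{lem:caseII}). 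Your compound-geometric picture is the moral summary of this, but the per-round decoupling---working with the $\mathcal F_{t-1}$-measurable $a_t$'s rather than with round-$j$ survivals---is what makes it rigorous.
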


We observe that the bound on the completion time stated in Theorem \ref{thm:SAER-terminates} for the regular case with
$\Delta = \Omega( \log^2n)$ is a simple consequence of the above lemma. 
Indeed,  consider any fixed ball of a client $v \in \sC$. By choosing\footnote{Since $d \geq 1$, the suitable value for $c$ can be fixed by the servers by looking only at $\eta$. We also remark our analysis does not optimize several aspects   such as the bound on $c$ and its relation with $\eta$. } the parameter $c$ as indicated  by Lemma \ref{lem:fraction_burned},  \eqref{eq:lemma_burned} implies that  the probability   the ball is  not  accepted for all rounds $t \leq 3 \log n$, conditioning  on  the bound given in Lemma \ref{lem:fraction_burned}, is $\left(1/2\right)^{3 \log n}= (1/n)^{3}$.  Then, by applying the  union bound  for all balls and all clients and considering the probability of the conditioning event, we get that \saer($c,d$) completes in $3 \log n$ rounds, with probability at least $1-O(1/n^2)$ .

The next subsection is devoted to the proof of Lemma \ref{lem:fraction_burned}.

\subsection{Proof of Lemma \ref{lem:fraction_burned}}
\label{ssec:lemma_proof}
In this subsection, we   assume that the graph $G((\sC,\sS),E)$ is $\Delta$-regular and $\Delta \geq \eta \log^2 n$ for an arbitrary constant $\eta >0$.
We start by defining the random variables that describe the \saer\ process. 

\begin{definition}
For each round $t \geq 1$ and for each $v \in \sC$,  let $r_t(N(v))$ be the overall number of balls that all the servers in the neighborhood   $N(v) \subseteq \sS$ receives at round $t$; moreover,  let $r_t$ be the maximum number of balls that any server neighborhood receives at round $t$. Formally, 
\begin{equation}
     r_t(N(v)) = \sum_{u \in N(v)}r_t(u) \quad \text{and}\quad r_t=\max_{v \in \sC} r_t(N(v)) \,  .
    \label{def:r_t(N(v))}
\end{equation}
   
\end{definition}
Observe that if a server is burned at a given round then it must have received more than $cd$ balls since  the start of the process. So,  for each $v \in \sC$ it holds that
\begin{equation} \label{eq:Stdet}
    S_t(v) \leq \frac{1}{cd\Delta}\sum_{i=1}^{t}r_i(N(v)) \, .
\end{equation}
   We also name the  expression in the r.h.s. of the inequality above since it will be often used in our analysis.
  
\begin{definition}
\label{def:K_t}
Let
\[K_t(v) = \frac{1}{cd \Delta}\sum_{i=1}^t r_i(N(v)) \
\mbox{ \emph{ and } } \ 
K_t = \frac{1}{cd\Delta}\sum_{i=1}^t r_i \,.
\]
\end{definition}
Notice that   the above definitions and  \eqref{eq:Stdet} easily imply that

\begin{equation}
    S_t \leq K_t
 \ \mbox{ and  } \  K_{t}=K_{t-1}+\frac{1}{cd \Delta}r_t \, , \ \mbox{for each $t \geq 1$} \, .
 \label{eq:bound_Kt_by_Kt-1}
\end{equation}
We next write  the  random variable $r_t(N(v))$ in terms of more ``elementary'' random variables.

\begin{definition}
\label{def:a_t}
For each client $v \in \sC$,  let $a_t^{(i)}(v)$ be the binary random variable indicating whether the $v$'s $i$-th ball  is still \emph{alive} at round $t$, i.e., it has still not been  accepted
by some server at the beginning of round $t$, i.e., 
$$ a_t^{(i)}(v)=
\begin{cases}
1 \text{ if the $v$'s $i$-th ball is still alive at round $t$} \\
0 \text{ otherwise}
\end{cases}
$$
\end{definition}

\begin{definition}
\label{def:z_t}
For each client, $v \in \sC$ and $u \in N(v) \subseteq \sS$, let $z_t^{(i)}(v,u)$ be the binary random variable indicating whether the (random)  contacted server  for the  $v$'s $i$-th ball
at round  $t$ is $u$, i.e.,
\begin{equation}
z_t^{(i)}(v,u)=
\begin{cases}
1 \text{\small \,\,if the contacted server for the $v$'s $i$-th ball at round $t$ is $u$} \\
0 \text{ otherwise}
\end{cases}
\end{equation}
\end{definition}
\noindent According to the above definitions,   for each client $v \in \sC$, we can write
\begin{equation}
 r_{t}\left(N\left(v\right)\right)=\sum_{u \in N(v)}r_t(u)=\sum_{u\in N\left(v\right)}\sum_{w\in N\left(u\right)}\sum_{i=1}^{d}a_{t}^{\left(i\right)}\left(w\right)\cdot z_{t}^{\left(i\right)}\left(w,u\right).
\label{eq:r_neigh}
\end{equation}

We remark that the variable $z_{t}^{\left(i\right)}\left(w,u\right)$ is defined at every round $t \geq 1$, even when the corresponding request $i$ of node $w$ has been already accepted in some previous round.  The above random variables have the following useful properties.

\begin{lemma}
\label{fact}
\begin{enumerate}
    \item For each    $t \geq 1$, $i \in [d]$, $v \in \sC$ and $u \in \sS$, the random variables $z_{t}^{\left(i\right)}\left(v,u\right)$  and  $a_{t}^{\left(i\right)}\left(v\right)$ are mutually independent.
    \item Let $s_0 = 1$. For each $v \in \sC$  and any choice of positive reals $s_j\leq 1$ for $j=1,\dots,t-1$, it holds
\begin{equation}
\Prc{a_{t}^{\left(i\right)}\left(v\right)=1\,|\,S_1 \leq s_1,\dots,S_{t-1}\leq s_{t-1}}\leq \prod_{j=0}^{t-1}s_{j} \,. \label{eq:bound_on_request_sent}
\end{equation}
\item The random variables $\left\{z_t^{(i)}(v,u)\right\}_{v \in \sC,u \in N(v),i \in [d]}$ are negatively associated \footnote{The definition of negative association is given in Definition \ref{def:neg_association} in   Appendix \ref{app:maths}.  This property allows to apply concentration bounds    (see Theorem \ref{thm:chernoff_neg_cor} in the Appendix).}.
\end{enumerate}
\end{lemma}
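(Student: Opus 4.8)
The plan is to fix a convenient probability space and then dispatch the three items in turn. For every round $t\ge 1$, every client $v\in\sC$ and every index $i\in[d]$, introduce independent random variables $X_t^{(i)}(v)$, each uniform on $N(v)$, interpreted as the server that ball $i$ of $v$ \emph{intends} to contact at round $t$; then $z_t^{(i)}(v,u)=\mathbf{1}\{X_t^{(i)}(v)=u\}$, and the entire execution of \saer\ (which balls are alive, which servers are burned, every $r_t(u)$, every $S_t$) is a deterministic function of the family $\{X_t^{(i)}(v)\}$. The key observation is that the event ``ball $i$ of $v$ is alive at the beginning of round $t$'' depends only on the choices made in rounds $1,\dots,t-1$. \emph{Item 1} is then immediate: $z_t^{(i)}(v,u)$ is $\sigma(X_t^{(i)}(v))$-measurable while $a_t^{(i)}(v)$ is measurable with respect to the $\sigma$-algebra generated by all $X_s^{(k)}(w)$ with $s\le t-1$; these two $\sigma$-algebras are generated by disjoint subfamilies of mutually independent variables, hence are independent. (The same argument actually gives joint independence of $a_t^{(i)}(v)$ with the whole round-$t$ choice vector $\left(z_t^{(i)}(v,u)\right)_{u\in N(v)}$, which is what the later analysis of $r_t(N(v))$ uses.)

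\emph{Item 2.} Fix $v,i,t$ and let $R_j$ be the event that ball $i$ of $v$ is alive at the start of round $j$ and is rejected at round $j$, so that $\{a_t^{(i)}(v)=1\}=R_1\cap\cdots\cap R_{t-1}$. I would bound $\Prc{R_1\cap\cdots\cap R_{t-1}\mid S_1\le s_1,\dots,S_{t-1}\le s_{t-1}}$ by peeling off one round at a time, using two structural facts: (a) a server that is not burned at round $j$ accepts \emph{every} ball it receives at round $j$, so on $R_j$ the intended server $X_j^{(i)}(v)$ must be burned at round $j$; and (b) on $\{S_j\le s_j\}$ at most $s_j\Delta$ servers of $N(v)$ are burned at round $j$. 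Conditioning on the history through round $j-1$ together with all intended choices at round $j$ \emph{except} $X_j^{(i)}(v)$, the variable $X_j^{(i)}(v)$ is still uniform on $N(v)$ and independent of this conditioning (Item 1), while the set of servers of $N(v)$ burned at round $j$ is determined up to the possible inclusion of $X_j^{(i)}(v)$ itself; hence the conditional probability of $R_j$ is at most $s_j$ on $\{S_j\le s_j\}$. Multiplying these bounds for $j=1,\dots,t-1$ and recalling the convention $s_0=1$ yields the factor $\prod_{j=0}^{t-1}s_j$.

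\emph{Item 3.} For a fixed round $t$ and each pair $(v,i)$, the vector $\left(z_t^{(i)}(v,u)\right)_{u\in N(v)}$ is the indicator vector of a single uniform draw from $N(v)$; a short direct computation shows that for any two disjoint index sets the covariance of monotone functionals of the two sub-vectors is nonpositive, so this vector is negatively associated (it is the one-trial multinomial, a textbook NA example). Distinct pairs $(v,i)$ and distinct rounds use \emph{independent} underlying variables $X_t^{(i)}(v)$, and a union of mutually independent NA families is NA — one of the standard closure properties recorded with Definition~\ref{def:neg_association} — so the whole family $\left\{z_t^{(i)}(v,u)\right\}_{v\in\sC,\,u\in N(v),\,i\in[d]}$ is negatively associated.

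The main obstacle is Item 2: the conditioning event involves all of $S_1,\dots,S_{t-1}$, so from the vantage point of round $j$ it carries ``future'' information, and moreover the server picked by ball $i$ at round $j$ can itself tip a server into the burned state, so the set of burned servers at round $j$ is not measurable with respect to the natural conditioning. The device of conditioning on everything at round $j$ except ball $i$'s own choice removes this circularity at the cost of a single extra server — a lower-order correction absorbed into the constant $c$ — but turning the telescoping into a fully rigorous chain (e.g. a martingale/coupling formulation in which the burned sets remain monotone in the conditioning) is where the real care is needed.
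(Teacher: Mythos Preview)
Your approach matches the paper's terse proof on all three items: Item~1 via independence of the round-$t$ fresh destination from the history-determined alive indicator, Item~2 via the round-by-round rejection decomposition (rejected iff the destination is burned, then multiply the per-round bounds), and Item~3 via the one-trial multinomial being NA together with closure under independent unions. The circularity you flag in Item~2 --- that ball $i$'s own round-$j$ choice can tip a server into the burned set, and that the conditioning carries ``future'' information --- is a genuine subtlety that the paper's one-line argument simply does not address, so you are being more careful here rather than missing something.
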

\begin{proof}[Proof of Lemma \ref{fact}]
Claim $1$ follows from the observation that \saer\  is non-adaptive and symmetric and,  hence,  at each round, each client
$v \in \sC$ chooses the (random) destination of its  $i$-th request regardless of the value of $a_{t}^{\left(i\right)}\left(v\right)$
while the latter determines whether the request is  really sent or not. \\
As for Claim 2, notice  that   $a_{t}^{\left(i\right)}\left(v\right)=1$ iff $v$'s
$i$-th request have been rejected at each previous 
round, and this happens iff the destination of the $i$-th request is a  burned server. \\
Finally,  Claim 3 follows from  the fact that, for each $v \in \sC$, if $z_t^{(i)}(v,u)=1$ for $u \in N(v)$ then, for any $u' \in \sC$ with $u' \neq u$,  it holds that $z_t^{(i)}(v,u')=0$.  Moreover, for each fixed $u \in \sS$ the random variables $\{z_t^{(i)}(v,u), v \in \sC, i \in [d]	\}$ are independent.
\end{proof}

%%%%%%%

\smallskip
\noindent
\textbf{Step-By-Step Analysis via Induction.}
We first  consider the first round of the process and give the following bound on the maximum number of balls a client neighborhood can receive.

\begin{lemma}[First round] \label{lem:firstround} 
For all $c,d\geq 1$, w.h.p. 
\begin{equation} \label{eq:upper_bound_r_1} 
    r_1 \leq 2d \Delta \, 
     \ \mbox{ \emph{ and } } \ 
K_1 \leq \frac{2}{c}\, .
\end{equation}
\end{lemma}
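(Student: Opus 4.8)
The point of this lemma is that round $1$ is the ``clean'' base case: no ball has yet been accepted, so every $a_1^{(i)}(w)$ equals $1$ and the triple sum \eqref{eq:r_neigh} reduces, for each client $v\in\sC$, to
\[
r_1(N(v)) \;=\; \sum_{u\in N(v)}\sum_{w\in N(u)}\sum_{i=1}^{d} z_1^{(i)}(w,u),
\]
a sum of $0/1$ random variables. First I would compute its expectation. Since at round $1$ each client sends each of its $d$ balls to a uniformly random neighbour, $\Prc{z_1^{(i)}(w,u)=1}=1/\Delta$ whenever $u\in N(w)$; by $\Delta$-regularity $\sum_{w\in N(u)}\sum_{i=1}^d 1/\Delta = d$ for every server $u$, hence $\Expcc{r_1(u)}=d$ and $\Expcc{r_1(N(v))}=\sum_{u\in N(v)}\Expcc{r_1(u)}=d\Delta$. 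Note that no request of a client $w$ is over-counted inside $\sum_{u\in N(v)}r_1(u)$, because $w$ routes its $i$-th ball to exactly one server; so the mean is exactly $d\Delta$, not merely an upper bound.

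Next I would apply a concentration bound. By Claim $3$ of Lemma \ref{fact} the family $\{z_1^{(i)}(w,u)\}$ is negatively associated, so the multiplicative Chernoff bound of Theorem \ref{thm:chernoff_neg_cor} applies to $r_1(N(v))$; taking deviation factor $2$ gives
\[
\Prc{r_1(N(v)) \ge 2d\Delta} \;\le\; \exp\!\left(-\frac{d\Delta}{3}\right) \;\le\; \exp\!\left(-\frac{\eta d}{3}\log^2 n\right),
\]
where the last step uses $\Delta\ge\eta\log^2 n$. This probability is $n^{-\omega(1)}$, so a union bound over the $n$ clients yields $\Prc{\exists v\in\sC:\ r_1(N(v)) \ge 2d\Delta}=n^{-\omega(1)}$, i.e. $r_1=\max_{v\in\sC}r_1(N(v))\le 2d\Delta$ w.h.p., which is the first claimed bound. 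The second bound is then immediate on the same event: by Definition \ref{def:K_t}, $K_1=\frac{1}{cd\Delta}\,r_1\le\frac{1}{cd\Delta}\cdot 2d\Delta=\frac{2}{c}$.

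There is no genuine obstacle here: this is the base case of the induction that drives the proof of Lemma \ref{lem:fraction_burned}, and all of the work is a single Chernoff estimate together with a union bound. The only points needing (minor) care are verifying that the negative-association hypothesis required by the Chernoff bound is in place — which is exactly Lemma \ref{fact}.3 — and observing that the sum defining $r_1(N(v))$ counts each request at most once, so that the relevant expectation equals $d\Delta$ and the factor-$2$ slack is genuine.
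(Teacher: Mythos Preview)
Your proposal is correct and follows essentially the same approach as the paper's own proof: compute $\Expcc{r_1(N(v))}=d\Delta$, invoke Claim~3 of Lemma~\ref{fact} to apply the Chernoff bound for negatively associated variables with $\varepsilon=1$, and then take a union bound over clients to pass from $r_1(N(v))$ to $r_1$ and hence to $K_1$. The only difference is cosmetic: you spell out that $a_1^{(i)}(w)\equiv 1$ at round~$1$ and justify why the mean is exactly $d\Delta$, whereas the paper simply states the expectation.
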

\begin{proof}[Proof of Lemma \ref{lem:firstround}] For each $v \in \sC$ we can write $r_1(N(v))$ as in \eqref{eq:r_neigh} and since  each $z_1^{(i)}(w,u)$ is a Bernoulli random variable of parameter $1/\Delta$,  $\Expcc{r_1(N(v))}=d \Delta$.
Thanks to Claim 3  of Lemma \ref{fact}, we can apply Chernoff bound for negatively associated random variables with $\varepsilon=1$ (Theorem \ref{thm:chernoff_neg_cor} in the Appendix) and get
\begin{equation}
    \Prc{r_1(N(v))\geq 2d \Delta } \leq e^{-\frac{d\Delta}{3}}.
    \label{eq:bound_r_1_little}
\end{equation}
According to Definition \ref{def:r_t(N(v))} and Definition \ref{def:K_t}, from \eqref{eq:bound_r_1_little} and by the 
union bound, we get that 
\begin{align}
\label{eq:bound_r1}
    \Prc {r_1 \leq 2 d\Delta }\geq 1-ne^{-\frac{d \Delta}{3}} \  \text{ and } \  \Prc{K_1 \leq \frac{2}{c}} \geq 1-ne^{-\frac{d \Delta}{3}} \,.
    \end{align}
Since $d \geq 1$ and $\Delta\geq \eta \log^2 n$, the above bounds conclude the proof.
\end{proof}

The next   result  is a key step of the proof of Lemma \ref{lem:fraction_burned}.  We look at a fixed  round $t \geq 2$ of the random process and    derive, for each client $v \in \sC$, an upper bound in concentration  on the random variable  $r_t(N(v))$, assuming some fixed  bounds on the variables $K_1,\dots,K_{t-1}$.
This bound shows that, conditional on the bound sequence above, 
the number of alive balls in $N(v)$ decreases, at each round $t$,  by a factor that explicit depends on the fraction of burned servers at round $t-1$.

\begin{lemma}[Round $t \geq 2$ by induction] \label{claim:chernoff}
Let  $v \in \sC$ and $k_0=1$. For  each choice of positive reals $k_j \leq 1$ with $j=1,\dots,t-1$ and for all $c,d \geq 1$,
\begin{equation}
\label{eq:bound_exp}
    \Expcc{r_t(N(v)) \mid K_1 \leq k_1,\dots K_{t-1}\leq k_{t-1}} \leq \Delta d \prod_{j=0}^{t-1}k_j \, .
\end{equation}
Moreover, for any $\mu>0$ such that $\mu \geq d\Delta \prod_{j=0}^{t-1}k_j$,
\begin{equation} \label{eq:chernoff_claim}
    \Prc{r_t(N(v)) \geq 2\mu  \mid K_1 \leq k_1, \dots, K_{t-1} \leq k_{t-1} } \leq e^{-\frac{\mu}{3}} \, .
    \end{equation}

\label{lem:general_round_t}
    \end{lemma}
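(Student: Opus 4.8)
The plan is to prove the two statements in sequence: first the conditional expectation bound \eqref{eq:bound_exp}, then the concentration bound \eqref{eq:chernoff_claim}, which will follow from the expectation bound together with the negative-association property (Claim 3 of Lemma \ref{fact}) and a Chernoff-type inequality. Throughout I condition on the event $\mathcal{K} := \{K_1 \leq k_1, \dots, K_{t-1} \leq k_{t-1}\}$.

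For \eqref{eq:bound_exp}, I would start from the decomposition \eqref{eq:r_neigh}:
\begin{equation*}
r_t(N(v)) = \sum_{u \in N(v)} \sum_{w \in N(u)} \sum_{i=1}^d a_t^{(i)}(w) \cdot z_t^{(i)}(w,u) \, .
\end{equation*}
Taking conditional expectation and using linearity, each term contributes $\Expcc{a_t^{(i)}(w) z_t^{(i)}(w,u) \mid \mathcal{K}}$. By Claim 1 of Lemma \ref{fact}, $z_t^{(i)}(w,u)$ is independent of $a_t^{(i)}(w)$; I would need to check this independence survives conditioning on $\mathcal{K}$ — the point is that $\mathcal{K}$ is an event about balls received up to round $t-1$ and about the $a$'s and $z$'s of earlier rounds, so the current-round choice $z_t^{(i)}(w,u)$ is still uniform and independent of everything in $\mathcal{K}$ and of $a_t^{(i)}(w)$. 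Hence the term factorizes as $\Prc{z_t^{(i)}(w,u) = 1} \cdot \Prc{a_t^{(i)}(w) = 1 \mid \mathcal{K}} = \frac{1}{\Delta} \cdot \Prc{a_t^{(i)}(w) = 1 \mid \mathcal{K}}$. The key subtlety is bounding $\Prc{a_t^{(i)}(w)=1 \mid \mathcal{K}}$: Claim 2 of Lemma \ref{fact} gives $\Prc{a_t^{(i)}(w) = 1 \mid S_1 \leq s_1, \dots, S_{t-1} \leq s_{t-1}} \leq \prod_{j=0}^{t-1} s_j$, and by \eqref{eq:bound_Kt_by_Kt-1} we have $S_j \leq K_j$, so conditioning on $\mathcal{K}$ (which forces $S_j \leq k_j$) yields $\Prc{a_t^{(i)}(w) = 1 \mid \mathcal{K}} \leq \prod_{j=0}^{t-1} k_j$. (A small monotonicity / law-of-total-probability argument is needed to pass from "conditioning on $\{S_j \leq s_j\}$" to "conditioning on $\mathcal{K}$", averaging over the finer $\sigma$-algebra.) Summing over the $\Delta$ choices of $u \in N(v)$, the $\Delta$ choices of $w \in N(u)$, the $d$ values of $i$, and the factor $1/\Delta$, I get at most $\Delta \cdot \Delta \cdot d \cdot \frac{1}{\Delta} \cdot \prod_{j=0}^{t-1} k_j = \Delta d \prod_{j=0}^{t-1} k_j$, which is \eqref{eq:bound_exp}.

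For \eqref{eq:chernoff_claim}, I would invoke the negative association of $\{z_t^{(i)}(w,u)\}$ from Claim 3 of Lemma \ref{fact}. Conditioning on $\mathcal{K}$ and on the values of all the $a_t^{(i)}(w)$'s, the quantity $r_t(N(v))$ is a sum of those $z_t^{(i)}(w,u)$ for which $a_t^{(i)}(w) = 1$ and $u \in N(v)$ — a subset of a negatively associated family, hence itself negatively associated, so Theorem \ref{thm:chernoff_neg_cor} (the Chernoff bound for negatively associated variables) applies with $\varepsilon = 1$. Given the conditional mean is at most $\mu$ (using $\mu \geq d\Delta\prod_{j=0}^{t-1} k_j \geq \Expcc{r_t(N(v)) \mid \mathcal{K}, \{a_t^{(i)}(w)\}}$ after taking a further sup over the $a$-values, or more carefully by a standard argument bounding the MGF), one obtains $\Prc{r_t(N(v)) \geq 2\mu \mid \mathcal{K}, \{a_t^{(i)}(w)\}} \leq e^{-\mu/3}$, and since this bound is uniform over the conditioning on the $a$'s, taking expectation over them gives \eqref{eq:chernoff_claim}.

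The main obstacle I anticipate is the careful handling of the conditioning — specifically, verifying that the product structure $\Expcc{a_t^{(i)}(w) z_t^{(i)}(w,u) \mid \mathcal{K}} = \frac{1}{\Delta}\Prc{a_t^{(i)}(w)=1\mid\mathcal{K}}$ remains valid when we condition on $\mathcal{K}$ (which is a complicated event depending on many rounds), and that negative association is preserved after conditioning on the $a_t^{(i)}(w)$ values so that the Chernoff bound is legitimately applicable. The actual combinatorial counting ($\Delta \cdot \Delta \cdot d \cdot \frac 1\Delta$) is routine; it is the measure-theoretic bookkeeping around the conditioning events, and the monotonicity step needed to apply Claim 2 of Lemma \ref{fact} to the event $\mathcal{K}$ rather than to $\{S_j \leq k_j\}$ directly, that require care.
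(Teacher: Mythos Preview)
Your argument for the expectation bound \eqref{eq:bound_exp} is essentially the same as the paper's: decompose via \eqref{eq:r_neigh}, factor using Claim~1 of Lemma~\ref{fact}, bound each $\Prc{a_t^{(i)}(w)=1\mid\mathcal K}$ via Claim~2, and count. The conditioning caveats you flag are real but manageable, and the paper handles them implicitly in the same way.

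The concentration argument, however, has a genuine gap. Your plan is to condition further on the realized values of all the $a_t^{(i)}(w)$'s, apply Chernoff to the remaining sum of (negatively associated) $z$'s, and then average. The problem is the step ``the conditional mean is at most $\mu$'': once you fix the $a$'s, the conditional mean of $r_t(N(v))$ is exactly
\[
\frac{1}{\Delta}\sum_{u\in N(v)}\sum_{w\in N(u)}\sum_{i=1}^d a_t^{(i)}(w),
\]
which can be as large as $d\Delta$, not $d\Delta\prod_{j=0}^{t-1}k_j$. So the Chernoff bound you get after conditioning on the $a$'s is \emph{not} uniform in those values, and averaging does not recover $e^{-\mu/3}$. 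Your fallback ``more careful MGF argument'' runs into the same wall: after bounding the inner MGF you are left with $\Expcc{\exp(c_\lambda\sum a_t^{(i)}(w))\mid\mathcal K}$ for some $c_\lambda>0$, and controlling this requires precisely a handle on the joint law of the correlated family $\{a_t^{(i)}(w)\}$, which is the whole difficulty.

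The paper's fix is a coupling: it constructs, conditionally on $\mathcal K$, a family of \emph{independent} Bernoulli variables $X_t^{(i)}(w)$ with parameter $\prod_{j=0}^{t-1}k_j$ such that $a_t^{(i)}(w)\le X_t^{(i)}(w)$ almost surely (the construction is carried out in Appendix~\ref{appendix:claim_chernoff}). Then $r_t(N(v))$ is dominated by $\sum X_t^{(i)}(w)z_t^{(i)}(w,u)$, a sum of negatively associated Bernoulli$(\prod k_j/\Delta)$ variables with mean exactly $d\Delta\prod k_j$, to which Theorem~\ref{thm:chernoff_neg_cor} applies directly. This coupling step is the missing idea in your proposal.
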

\begin{proof}[Proof of Lemma \ref{lem:general_round_t}]
By expressing   $r_t(N(v))$ as the sum    in \eqref{eq:r_neigh}, we can   apply  the first two claims  in Lemma \ref{fact} and get 
\begin{equation}
    \Expcc{r_t(N(v)) \mid K_1 \leq k_1,\dots,K_{t-1}\leq k_{t-1}} \leq d\Delta \prod_{j=0}^{t-1}k_j \, .
\end{equation}
In order to get the claimed bound in concentration, 
we need to  apply the Chernoff bound to the sum of random variables of the form $a_t^{(i)}(w) \cdot z_t^{(i)}(w,u)$.
To this aim, we know that for each $u \in \sS$ and each $w \in N(u)$, $z_t^{(i)}(w,u)$ is a Bernoulli random variables of parameter $1/\Delta$. However,  the distributions of $a_t^{(i)}(w) \cdot z_t^{(i)}(w,u)$ are rather difficult to analyze since there are several correlations among the random variables  in $\{ a_t^{(i)}(w) \, : \ w \in \sC,  \, i \in [d]\}$.
To cope with this issue, we exploit Claim 2 of Lemma \ref{fact}  and  construct $nd$ ad-hoc independent Bernoulli random variables, $\left(X_t^{(i)}(w) \right)_{w \in \sC,i \in [d]}$ for which: 
\begin{equation}
    \Prc{X_t^{(i)}(w)=1 \mid  K_1 \leq k_1, \dots, K_{t-1}\leq k_{t-1}}=\prod_{j=0}^{t-1}k_j \, 
    \label{def:X_t}
\end{equation}
and such that each $X_t^{(i)}(w)$ stochastically dominates   $a_t^{(i)}(w)$.
Formally, thanks to  \eqref{def:X_t} and Claim $2$ of Lemma \ref{fact}, we can define a \emph{coupling}\footnote{See for instance Chapter $05$ of \cite{LPW06}.} between $a_t^{(i)}(w)$ and $X_t^{(i)}(w)$    
such that 
\begin{align}
    \mathbf{Pr}\bigr( \bigcap_{i \in [d], w \in \sC} \left\{ a_t^{(i)}(w) \leq X_t^{(i)}(w) \right\}  \mid K_1\leq k_1,\dots,K_{t-1}\leq k_{t-1} \bigr)=1.\,\,
    \label{eq:prop_coupling}
\end{align}
The detailed construction  of the above coupling is given in   Appendix \ref{appendix:claim_chernoff}.
By using the  coupling, from \eqref{eq:prop_coupling}, we get  
\begin{align}
& \Prc {r_{t}\left(N\left(v\right)\right)\geq 2 \mu \mid K_1 \leq k_1,\dots, K_{t-1} \leq k_{t-1}}  \notag \\
& \leq \mathbf{Pr} \biggr( \sum_{i=1}^{d}\sum_{u\in N\left(v\right)}\sum_{w\in N\left(u\right)}X_t^{\left(i\right)}\left(w\right)\cdot z_{t}^{\left(i\right)}\left(w,u\right)\geq 2\mu  \mid K_1 \leq k_1, \dots,K_{t-1}\leq k_{t-1} \biggr)
\leq  \ e^{-\frac{\mu}{3}} \, ,\label{eq:main_chernoff_bound}
\end{align}
where $\mu>0$ is any positive real that satisfies $\mu \geq d \Delta \prod_{j=0}^{t-1}k_j$.
In detail, \eqref{eq:main_chernoff_bound} follows from  \eqref{eq:prop_coupling} and   the inequality \eqref{eq:main_chernoff_bound} follows by applying the Chernoff bound with $\varepsilon=1$ for negatively associated random variables (see Theorem \ref{thm:chernoff_neg_cor} in the Appendix). Indeed, Claim $3$ of Lemma \ref{fact} and \eqref{def:X_t} imply that the random variables \[\left(X_t^{(i)}(w)\cdot z_t^{(i)}(w,u)\right)_{i \in [d],u\in N(v),w \in N(u)},\] conditioning on the event $\{K_1,\leq k_1,\dots,K_{t-1}\leq k_{t-1}\}$, are distributed as  Bernoulli's one  of parameter $\prod_{j=0}^{t-1}k_j/\Delta$ and they  are negatively associated (see Definition \ref{def:neg_association} in the Appendix). 

\end{proof}

\smallskip
\noindent
\textbf{Wrapping up: Process Analysis in Two Time Stages.}
Lemmas \ref{lem:firstround} and \ref{lem:general_round_t} provide the decreasing rate  of the number of   $r_t(N(v))$ for each $v \in \sC$
conditioning on the events ``$K_j \leq k_j$'' for  a \emph{generic} sequence $k_j$ ($j=1, \ldots, t-1$).

We now need to derive the specific  sequence of  $k_j$   that effectively works
for our process  and that leads to   Lemma  \ref{lem:fraction_burned}. 
Moreover, we notice  that \eqref{eq:bound_exp} in Lemma \ref{lem:general_round_t} (only) allows
a sufficiently strong concentration as long as the bound $\mu$ we can use on the expectation of  $r_t(N(v))$ keeps of order $\Omega(\log n)$, while we clearly need to get an effective  concentration bound until   this value reaches    $0$.

To address    the issues above, we split our analysis in two time stages.
Roughly speaking,   the first stage  proceeds as long as the expectation  of  $r_t(N(v))$ is  $\Omega(\log n)$ and we show it is characterized by an exponential decreasing of $r_t(N(v))$ (see Lemma \ref{claim:recurrence} and Lemma \ref{lem:caseI}).
In the second stage, our technical goal is instead  to show that the fraction of burned nodes in $N(v)$ keeps bounded by some constant $<1$, while neglecting the decreasing rate of  the balls received by $N(v)$ (since we cannot anymore get strong concentration bounds on this random variable). Essentially,  our analysis shows that: i) the process starts this second stage when the expectation   of $r_t(N(v))$ is $\Theta(\log n)$; ii)  during a subsequent window of $\bigO(\log n)$   rounds, the fraction of burned nodes in $N(v)$ keeps bounded by some constant $<1$ and, hence,  all the alive requests will be successfully assigned, w.h.p.

As for the first stage, we consider the   sequence
  $\{\gamma_t\}_{t \in \mathbb{N}}$ defined by the following recurrence
\begin{equation}
 \begin{cases}
 \gamma_0=1 \,  \\
\gamma_t=\frac{2}{c} \sum_{i=1}^t\prod_{j=0}^{i-1}\gamma_j \ \mbox{ for } t \geq 1 \, .
\end{cases}
\label{def:gamma}
\end{equation}
In Appendix \ref{sec:proof_clam_recurrence},  we will prove the following properties.

\begin{lemma} \label{claim:recurrence}
For each $c>1$, let $\{\gamma_t\}_{t \geq 0}$ be the sequence defined by the recurrence \eqref{def:gamma}. Then, if we take $\alpha \geq 2$ such that $\frac{2}{c}\leq \frac{1}{\alpha^2}$, we have the following facts:
\begin{itemize}
    \item $\{\gamma_t\}_{t \in \mathbb{N}}$ is increasing;
    \item for each $t \geq 1$, $\gamma_t \leq \frac{1}{\alpha}$;
    \item for each $t \geq 1$, $\prod_{j=0}^{t-1}\gamma_j \leq \frac{1}{\alpha^t}$. 
\end{itemize}
\end{lemma}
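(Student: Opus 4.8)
The plan is to analyze the recurrence \eqref{def:gamma} directly, reading off structural facts about the partial products. First I would rewrite the recurrence in a more convenient incremental form: setting $P_t = \prod_{j=0}^{t-1}\gamma_j$ (so $P_0 = 1$, $P_1 = \gamma_0 = 1$), the definition gives $\gamma_t = \frac{2}{c}\sum_{i=1}^{t}P_i$, and subtracting consecutive terms yields $\gamma_t - \gamma_{t-1} = \frac{2}{c}P_t = \frac{2}{c}\gamma_{t-1}P_{t-1}$ for $t\ge 2$, while $\gamma_1 = \frac{2}{c}P_1 = \frac{2}{c}$. From $\gamma_t = \gamma_{t-1}(1 + \frac{2}{c}P_{t-1})$ and the fact that all $\gamma_j$ (hence all $P_j$) are nonnegative, monotonicity $\gamma_t \ge \gamma_{t-1}$ is immediate once we know the $\gamma_j$ stay nonnegative — and since $\gamma_1 = 2/c > 0$ and each factor $1 + \frac{2}{c}P_{t-1} \ge 1$, an easy induction shows $\gamma_t > 0$ and $\{\gamma_t\}$ is (weakly) increasing for $t\ge 1$; I would also note $\gamma_1 = 2/c \le 1/\alpha^2 \le 1/\alpha$ to anchor the base case of the next bound.

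Next I would prove the bound $\gamma_t \le 1/\alpha$ for all $t\ge 1$ by strong induction, simultaneously with $P_t \le 1/\alpha^t$. Assume $\gamma_j \le 1/\alpha$ for all $1 \le j \le t-1$; then $P_t = \prod_{j=1}^{t-1}\gamma_j \le (1/\alpha)^{t-1}$ (using $\gamma_0 = 1$), which already gives the third bullet up to index $t$ (indeed $P_t \le \alpha^{-(t-1)} \le \alpha^{-t}\cdot\alpha$; to get exactly $\alpha^{-t}$ I would instead carry the sharper inductive claim $P_t \le \alpha^{-t}$ for $t \ge 1$, noting $P_1 = 1$ needs the convention — so more cleanly, prove $P_t \le \alpha^{-(t-1)}$ and separately observe the lemma's third bullet as stated, $\prod_{j=0}^{t-1}\gamma_j \le \alpha^{-t}$, requires $\gamma_0\gamma_1\cdots\gamma_{t-1}$; since $\gamma_0 = 1$ this is $\gamma_1\cdots\gamma_{t-1} \le \alpha^{-(t-1)}$, and I'd reconcile the indexing by being careful that the claim is really $P_t \le \alpha^{-(t-1)}$ with the paper's $\prod_{j=0}^{t-1}$ notation). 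Granting the product bound, I then estimate $\gamma_t = \frac{2}{c}\sum_{i=1}^{t}P_i \le \frac{2}{c}\sum_{i=1}^{t}\alpha^{-(i-1)} \le \frac{2}{c}\cdot\frac{1}{1-1/\alpha} = \frac{2}{c}\cdot\frac{\alpha}{\alpha-1}$. Using $\alpha \ge 2$ so that $\frac{\alpha}{\alpha-1} \le 2$, and $\frac{2}{c} \le \frac{1}{\alpha^2}$, this gives $\gamma_t \le \frac{2}{\alpha^2} \le \frac{1}{\alpha}$ (since $\alpha \ge 2$ implies $2/\alpha^2 \le 1/\alpha$), closing the induction.

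The main obstacle — really the only delicate point — is getting the index bookkeeping exactly consistent with the paper's convention $\gamma_0 = 1$ and $\prod_{j=0}^{t-1}\gamma_j$, so that the geometric sum $\sum_{i=1}^{t} \prod_{j=0}^{i-1}\gamma_j$ is correctly bounded by $\sum_{i\ge 1}\alpha^{-(i-1)} = \frac{\alpha}{\alpha-1}$ rather than being off by a factor of $\alpha$; the hypothesis $\alpha\ge 2$ is exactly what provides the slack ($\frac{\alpha}{\alpha-1}\le 2$ and $2/\alpha^2 \le 1/\alpha$) to absorb any such constant. Everything else is a routine two-line induction. I would organize the write-up as: (1) derive $\gamma_t = \gamma_{t-1}(1+\tfrac{2}{c}P_{t-1})$ and conclude positivity and monotonicity; (2) simultaneous strong induction establishing $\prod_{j=0}^{t-1}\gamma_j \le \alpha^{-t+1}$ and $\gamma_t \le 1/\alpha$, with the geometric-series estimate and the inequalities $\alpha/(\alpha-1)\le 2$, $2/c \le 1/\alpha^2$ doing the work; (3) note the third bullet is immediate from step (2). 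No coupling, concentration, or graph-theoretic input is needed here — this lemma is purely about the deterministic scalar recurrence.
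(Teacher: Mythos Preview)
Your approach is correct and somewhat different from the paper's. The paper derives the same incremental relation $\gamma_{t+1}=\gamma_t+\tfrac{2}{c}\prod_{j=0}^{t}\gamma_j$ to get monotonicity, but then proves the \emph{sharper} inductive invariant $\gamma_t \le \tfrac{1}{\alpha}-\tfrac{1}{\alpha^{t+1}}$ (the extra negative term is exactly what absorbs the increment $\tfrac{2}{c}\prod_{j\le t}\gamma_j \le \tfrac{1}{\alpha^{t+2}}$ at each step). You instead bound $\gamma_t=\tfrac{2}{c}\sum_{i=1}^{t}P_i$ directly by a geometric series, using $P_i\le \alpha^{-(i-1)}$ and $\tfrac{\alpha}{\alpha-1}\le 2$. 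Both routes land on $\gamma_t\le 1/\alpha$; yours is slightly more transparent about where the slack comes from, while the paper's telescoping invariant is a bit tighter and avoids summing the series.

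Your hesitation about the third bullet is well founded: as stated, $\prod_{j=0}^{t-1}\gamma_j\le\alpha^{-t}$ is \emph{false} at $t=1$ (the product is $\gamma_0=1$). From $\gamma_j\le 1/\alpha$ alone one only gets $\prod_{j=0}^{t-1}\gamma_j\le\alpha^{-(t-1)}$, which is what both your argument and the paper's invariant actually yield. The way to recover the bound $\alpha^{-t}$ for $t\ge 2$ (which is all the paper ever uses, since $T\ge 2$ there) is to exploit the extra factor coming from $\gamma_1=\tfrac{2}{c}\le\tfrac{1}{\alpha^2}$: then $\prod_{j=0}^{t-1}\gamma_j\le 1\cdot\tfrac{1}{\alpha^2}\cdot\bigl(\tfrac{1}{\alpha}\bigr)^{t-2}=\alpha^{-t}$. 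I would add that one line rather than trying to ``reconcile the indexing''; the discrepancy is in the lemma statement, not in your argument.
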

The next  lemma   provides some useful  concentration bounds on the random variables $K_t$ and 
$r_t(N(v))$ for the first stage.

\begin{lemma}[Stage I: Fast decreasing of $r_t(N(v))$]
\label{lem:caseI} For any $c \geq 32$ and for a sufficiently large $n$, an integer $T=\bigO \left(\log(d\Delta/\log n)\right)$ exists such that,
 for each $0 \leq t < T$, 
       \begin{align}
       \label{eq:epxdecrrt}
    & \mathbf{Pr}\biggr(r_t \leq 2 d\Delta \prod_{j=0}^{t-1}\gamma_j \mid K_1 \leq \gamma_1,\dots,K_{t-1}\leq \gamma_{t-1}\biggr) \geq 1-\frac{1}{n^3} \\
    \label{eq:PhaseI} &\text{and} \quad
        \Prc{K_t \leq \gamma_t \mid K_1 \leq \gamma_1,\dots,K_{t-1}\leq \gamma_{t-1}} \geq 1-\frac{1}{n^3} \, , \quad
      \end{align}
     where $\gamma_t$  
is  defined  by the recurrence \eqref{def:gamma}.
\end{lemma}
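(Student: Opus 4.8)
The plan is to propagate the one-step conditional estimate of Lemma~\ref{lem:general_round_t} along the deterministic sequence $\{\gamma_t\}$, stopping exactly when the conditional mean $d\Delta\prod_{j=0}^{t-1}\gamma_j$ of $r_t(N(v))$ is about to fall below the $\Theta(\log n)$ level at which a union bound over the $n$ clients still beats the per-round failure probability. First fix $\alpha=2$; since $c\ge 32$ we have $2/c\le 1/16\le 1/\alpha^{2}$, so Lemma~\ref{claim:recurrence} gives that $\{\gamma_t\}_{t\ge 1}$ is increasing, $\gamma_t\le 1/2$, and $\prod_{j=0}^{t-1}\gamma_j\le 2^{-t}$. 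Hence $d\Delta\prod_{j=0}^{t-1}\gamma_j$ equals $d\Delta$ at $t=1$ and decreases geometrically, so
\[
T \;:=\; \max\Bigl\{\,t\ge 1 \;:\; d\Delta\prod_{j=0}^{t-1}\gamma_j \ge 12\log n \,\Bigr\}
\]
is a well-defined finite integer for $n$ large (and $T\ge 1$, because $d\Delta\ge d\eta\log^{2}n\ge 12\log n$); the defining inequality combined with $\prod_{j=0}^{t-1}\gamma_j\le 2^{-t}$ forces $2^{T}\le d\Delta/(12\log n)$, i.e.\ $T\le\log_2\!\big(d\Delta/(12\log n)\big)=\bigO(\log(d\Delta/\log n))$, as required. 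Since $d\Delta\prod_{j=0}^{t-1}\gamma_j$ is nonincreasing in $t$, the bound $d\Delta\prod_{j=0}^{t-1}\gamma_j\ge 12\log n$ holds for every $1\le t\le T$, in particular for every $t<T$.

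I then verify the two displayed inequalities round by round. For $t=0$ the claim is vacuous (empty product, no conditioning, $K_0=0\le\gamma_0=1$). For $t=1$ the conditioning event is empty, $2d\Delta\prod_{j=0}^{0}\gamma_j=2d\Delta$ and $\gamma_1=2/c$, so \eqref{eq:epxdecrrt} and \eqref{eq:PhaseI} are exactly the two bounds of Lemma~\ref{lem:firstround}, whose failure probability $ne^{-d\Delta/3}\le ne^{-\eta\log^{2}n/3}$ is below $1/n^{3}$ for $n$ large. For a generic $2\le t<T$ I apply Lemma~\ref{lem:general_round_t} with $k_j=\gamma_j$ (legitimate since $\gamma_j\le 1$) and $\mu:=d\Delta\prod_{j=0}^{t-1}\gamma_j$, which meets the hypothesis $\mu\ge d\Delta\prod_{j=0}^{t-1}\gamma_j$ with equality; this gives
\[
\Prc{r_t(N(v))\ge 2d\Delta\prod_{j=0}^{t-1}\gamma_j \mid K_1\le\gamma_1,\dots,K_{t-1}\le\gamma_{t-1}}\;\le\;e^{-\mu/3}\;\le\;e^{-4\log n}\;\le\;n^{-4},
\]
using $\mu\ge 12\log n$. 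A union bound over the $n$ clients $v\in\sC$ upgrades this to \eqref{eq:epxdecrrt}. Finally \eqref{eq:PhaseI} is immediate: by \eqref{eq:bound_Kt_by_Kt-1}, $K_t=K_{t-1}+\tfrac{1}{cd\Delta}r_t$, and the recurrence \eqref{def:gamma} yields the telescoping identity $\gamma_t-\gamma_{t-1}=\tfrac{2}{c}\prod_{j=0}^{t-1}\gamma_j$ for $t\ge 2$; hence, conditioned on $K_{t-1}\le\gamma_{t-1}$, the event $\{r_t\le 2d\Delta\prod_{j=0}^{t-1}\gamma_j\}$ is contained in $\{K_t\le\gamma_{t-1}+\tfrac{2}{c}\prod_{j=0}^{t-1}\gamma_j\}=\{K_t\le\gamma_t\}$, so its conditional probability is at least $1-n^{-3}$ as well.

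The Chernoff and union-bound bookkeeping, together with the elementary facts about $\{\gamma_t\}$ quoted from Lemma~\ref{claim:recurrence}, are routine. The one point needing real care is the calibration of the stopping time $T$: it must be long enough that Stage~I disposes of the bulk of the requests, yet chosen so that $d\Delta\prod_{j=0}^{t-1}\gamma_j$ never drops below the $\Theta(\log n)$ threshold that lets a single-client failure probability survive a union bound over all $n$ clients — and it is precisely the geometric decay $\prod_{j=0}^{t-1}\gamma_j\le 2^{-t}$ that makes this window have length $\bigO(\log(d\Delta/\log n))$. Paired with the telescoping identity $\gamma_t-\gamma_{t-1}=\tfrac{2}{c}\prod_{j=0}^{t-1}\gamma_j$, which is exactly what makes the passage from \eqref{eq:epxdecrrt} to \eqref{eq:PhaseI} close with no slack, this is essentially the whole content of the proof; I do not anticipate any obstacle beyond keeping the two conditioning events consistent.
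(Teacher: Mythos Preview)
Your proof is correct and follows essentially the same approach as the paper's: apply Lemma~\ref{lem:general_round_t} with $k_j=\gamma_j$ and $\mu=d\Delta\prod_{j=0}^{t-1}\gamma_j$, union-bound over the $n$ clients, and use \eqref{eq:bound_Kt_by_Kt-1} together with the telescoping identity $\gamma_t-\gamma_{t-1}=\tfrac{2}{c}\prod_{j=0}^{t-1}\gamma_j$ to pass from the $r_t$ bound to the $K_t$ bound. The only cosmetic differences are that the paper invokes Lemma~\ref{claim:recurrence} with $\alpha=4$ (giving $\prod_{j}\gamma_j\le 4^{-t}$ and hence the slightly sharper $T\le\tfrac12\log_2(d\Delta/(12\log n))$) and defines $T$ as the smallest $t$ with $d\Delta\prod_{j=0}^{t-1}\gamma_j\le 12\log n$ rather than the largest $t$ with the reverse inequality---both equivalent up to constants.
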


\begin{proof}[Proof of Lemma \ref{lem:caseI}]
We consider   $\gamma_t$  as in \eqref{def:gamma} and apply  Lemma \ref{claim:chernoff} with  $\mu =\Delta d \prod_{j=0}^{t-1}\gamma_t$. We get, for each $v \in \sC$, 
\begin{align*}
    \Prc {r_t(N(v))\geq 2 d\Delta \prod_{j=0}^{t-1} \gamma_j \mid K_1 \leq \gamma_1,\dots,K_{t-1} \leq \gamma_{t-1}} \leq e^{-\frac{1}{3}d \Delta \prod_{j=0}^{t-1}\gamma_j} \, .
\end{align*}
  From     \eqref{eq:bound_Kt_by_Kt-1}, we know that
  $K_{t}=K_{t-1}+ \frac{1}{cd \Delta}r_t$, so,  using 
  the  union bound  over  all clients $v$, we get 
\begin{align}
    &\Prc{K_t \leq \gamma_t \mid K_1 \leq \gamma_1, \dots, K_{t-1} \leq \gamma_{t-1}} \geq \notag \notag 
     \Prc {r_t \leq 2\Delta d \prod_{j=0}^{t-1}\gamma_j \mid K_1 \leq \gamma_1,\dots,K_{t-1}\leq \gamma_{t-1}} \notag \\ &\geq 1-ne^{-\frac{1}{3}\Delta d \prod_{j=0}^{t-1}\gamma_j},
    \label{eq:bound_K_t}
\end{align}
where in the first inequality we also used      the definition of $\gamma_t$ given  in  \eqref{def:gamma}.
Lemma \ref{claim:recurrence} and the fact that $\Delta \geq \eta \log^2 n$ ensure   that 
 for a sufficiently large $n$ we can take $T\geq 1$ as the smallest positive for which
\begin{equation}
    \Delta d \prod_{j=0}^{T-1}\gamma_j  \leq 12 \log n\,
    \label{eq:condition_T}
\end{equation}
thus 
\begin{equation}
\Delta d \prod_{j=0}^{t-1}\gamma_j > 12 \log n \quad \text{for all $t < T$.}
\label{eq:condition_T_sostitute}
\end{equation}
Moreover, again from Lemma \ref{claim:recurrence}, if we take $c \geq 32$ we have that
$\prod_{j=0}^{T-1}\gamma_j \leq \left(1/4\right)^{T}$
and so, from \eqref{eq:condition_T},  we can say that such a $T$ verifies
\begin{equation*}
 T \leq \frac{1}{2} \log \frac{d \Delta}{12 \log n} \, .
\end{equation*}
Finally, using  \eqref{eq:condition_T_sostitute}   in  \eqref{eq:bound_K_t}, we get \eqref{eq:PhaseI}
for 
 each $t < T$.
\end{proof}

%%%%%%%%%%%%%%%%%%%%%%%
The next result   characterizes the number  of  burned servers along the second, final stage of our process analysis.  

\begin{lemma}[Stage II: The fraction of burned servers keeps small]
\label{lem:caseII}  For any $c \geq \max(32, 288/(\eta d))$ and for a sufficiently large $n$, an integer  $T \geq 1$ exists (it can be the same stated in the previous lemma) such that, for  each  $t$  in the range $[T,3 \log n]$,
    \begin{align}
        \mathbf{Pr}\bigr(K_t \leq \delta_t \mid K_1 \leq \gamma_1,\dots,K_{T-1}\leq \gamma_{T-1},K_T \leq \delta_T,\dots,K_{t-1}\leq \delta_{t-1} \bigr) \geq 1- \frac{1}{n^3} \, , 
        \label{eq:main_high_prob_2}
    \end{align}
 where $\gamma_t$ is defined in \eqref{def:gamma} and $\delta_t$
is defined  by the recurrence  
\begin{equation} \label{def:delta}
\delta_t=\frac{1}{4}+\frac{24t \log n}{cd \Delta} \, ,  \text{ for $t \geq T$.}
\end{equation}
\end{lemma}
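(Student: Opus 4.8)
The plan is to prove Lemma~\ref{lem:caseII} by induction on $t$ over the window $[T,3\log n]$, feeding at each step the whole accumulated list of bounds into Lemma~\ref{lem:general_round_t}. The key observation is that the conditioning event in \eqref{eq:main_high_prob_2}, namely $\{K_1\le\gamma_1,\dots,K_{T-1}\le\gamma_{T-1},K_T\le\delta_T,\dots,K_{t-1}\le\delta_{t-1}\}$, is exactly of the form $\{K_1\le k_1,\dots,K_{t-1}\le k_{t-1}\}$ required by Lemma~\ref{lem:general_round_t}, with $k_j=\gamma_j$ for $j<T$ and $k_j=\delta_j$ for $T\le j\le t-1$; so that lemma applies once we check $k_j\le1$. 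We will need three preliminary facts. (i) By the definition of $T$ in Lemma~\ref{lem:caseI}, $\Delta d\prod_{j=0}^{T-1}\gamma_j\le 12\log n$; this also forces $T\ge2$ for large $n$, since at $T=1$ the left-hand side would be $\Delta d\ge\eta d\log^2 n>12\log n$, so every round we handle has $t\ge2$ and Lemma~\ref{lem:general_round_t} is legitimately invoked. (ii) For every $j\in[T,3\log n]$, using $\Delta\ge\eta\log^2 n$ and $c\ge 288/(\eta d)$,
\[
\delta_j=\frac14+\frac{24j\log n}{cd\Delta}\le\frac14+\frac{72}{cd\eta}\le\frac14+\frac14=\frac12\le1 ,
\]
so all the $k_j$'s lie in $(0,1]$. (iii) Since $c\ge32$, Lemma~\ref{claim:recurrence} (with $\alpha=4$) gives $\gamma_j\le 1/4$ for all $j\ge1$, hence $K_{T-1}\le\gamma_{T-1}\le 1/4$ on the conditioning event.

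\textbf{Base case $t=T$.} Conditioning on $\{K_1\le\gamma_1,\dots,K_{T-1}\le\gamma_{T-1}\}$, Lemma~\ref{lem:general_round_t} gives $\Expcc{r_T(N(v))\mid\cdots}\le\Delta d\prod_{j=0}^{T-1}\gamma_j\le 12\log n$ for each $v\in\sC$; taking $\mu=12\log n$ (which dominates this expectation bound), the tail estimate of the same lemma gives $\Prc{r_T(N(v))\ge 24\log n\mid\cdots}\le e^{-\mu/3}=e^{-4\log n}=n^{-4}$. A union bound over the $n$ clients yields $\Prc{r_T\le 24\log n\mid\cdots}\ge 1-n^{-3}$, and on that event \eqref{eq:bound_Kt_by_Kt-1} together with $K_{T-1}\le 1/4$ gives
\[
K_T=K_{T-1}+\frac{r_T}{cd\Delta}\le\frac14+\frac{24\log n}{cd\Delta}\le\frac14+\frac{24T\log n}{cd\Delta}=\delta_T ,
\]
the last step using $T\ge1$; this is \eqref{eq:main_high_prob_2} for $t=T$.

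\textbf{Inductive step $t\in[T+1,3\log n]$.} Conditioning on the full list $\{K_1\le\gamma_1,\dots,K_{T-1}\le\gamma_{T-1},K_T\le\delta_T,\dots,K_{t-1}\le\delta_{t-1}\}$, Lemma~\ref{lem:general_round_t} gives, for each $v\in\sC$, $\Expcc{r_t(N(v))\mid\cdots}\le\Delta d\bigl(\prod_{j=0}^{T-1}\gamma_j\bigr)\bigl(\prod_{j=T}^{t-1}\delta_j\bigr)\le\Delta d\cdot\frac{12\log n}{\Delta d}\cdot(1/2)^{t-T}\le 12\log n$, where we used facts (i) and (ii). Exactly as in the base case, taking $\mu=12\log n$ and a union bound over clients yields $\Prc{r_t\le 24\log n\mid\cdots}\ge 1-n^{-3}$, and on that event $K_{t-1}\le\delta_{t-1}$ (part of the conditioning) together with \eqref{eq:bound_Kt_by_Kt-1} gives
\[
K_t=K_{t-1}+\frac{r_t}{cd\Delta}\le\delta_{t-1}+\frac{24\log n}{cd\Delta}=\frac14+\frac{24(t-1)\log n}{cd\Delta}+\frac{24\log n}{cd\Delta}=\delta_t ,
\]
which closes the induction and proves \eqref{eq:main_high_prob_2}.

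\textbf{Where the difficulty sits.} There is no single hard estimate here; the real content is understanding \emph{why} the analysis must change form at round $T$ and calibrating $\delta_t$ accordingly. The Stage~I recursion $K_t\le\gamma_t$ is self-improving only while $\Delta d\prod_{j=0}^{t-1}\gamma_j=\omega(\log n)$: at $t=T$ this quantity has dropped to $\Theta(\log n)$, so a Chernoff bound still gives an $n^{-\Omega(1)}$ tail on $r_t(N(v))$ but can no longer certify that $K_t$ stays below a fixed constant. The remedy is to stop tracking the (now poorly concentrated) decay of $r_t(N(v))$ and instead bound only the \emph{additive} growth of $K_t$: each further round contributes at most $24\log n/(cd\Delta)$ to $K_t$ w.h.p., and since the window has length $O(\log n)$ and $\Delta\ge\eta\log^2 n$, the total growth over Stage~II is $O(1/(c\eta d))$, so $c\ge288/(\eta d)$ keeps $K_t\le\delta_t\le1/2$ throughout. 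The points needing a little care are: checking $\delta_j\le1$ on the whole window so Lemma~\ref{lem:general_round_t} applies with $k_j=\delta_j$; verifying $T\ge2$ so all invoked rounds have $t\ge2$; and making sure the base case inherits $K_{T-1}\le1/4$ from the tail of Stage~I (trivially true if one also checks $T\ge2$, since $\gamma_{T-1}\le1/4$).
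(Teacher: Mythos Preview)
Your proof is correct and follows essentially the same approach as the paper: apply Lemma~\ref{claim:chernoff} with $\mu=12\log n$ under the accumulated conditioning, union-bound over clients, and use \eqref{eq:bound_Kt_by_Kt-1} together with the recursive definition of $\delta_t$. You are slightly more explicit than the paper in separating the base case $t=T$ from the inductive step, in checking $T\ge2$ so that Lemma~\ref{claim:chernoff} is legitimately invoked, and in verifying $\delta_j\le1$, but the substance is identical.
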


%%%%%. PROOF OF THE SECOND PHASE
\begin{proof}[Proof of Lemma \ref{lem:caseII}]
  As in the proof of Lemma \ref{lem:caseI}, let  $T$ be the first integer such that
\begin{equation} \Delta d \prod_{j=0}^{T-1}\gamma_j \leq 12 \log n \,
.
\label{eq:condition_T_2}
\end{equation}
  Observe first  that, for each $t \leq 3 \log n$, since $\Delta \geq \eta \log^2 n$, for  $c \geq 288/(d \eta)$, we have that $\delta_t \leq 1/2$.
So, for each $t$ such that $T \leq t \leq 3 \log n$, \eqref{eq:condition_T_2} and  Lemma \ref{claim:chernoff} imply 
\begin{align}
    \Expcc{r_t(N(v)) \mid K_1 \leq \gamma_1,\dots, K_T \leq \delta_T,\dots, K_{t-1}\leq \delta_{t-1}}\leq d\Delta \prod_{j=0}^{T-1}\gamma_t \prod_{i=T}^{t-1}\delta_i \leq d \Delta \prod_{j=0}^{T-1}\gamma_t \leq 12 \log n. \notag
\end{align}
Hence,  we can   apply
\eqref{eq:chernoff_claim} in Lemma  \ref{claim:chernoff}  with \\ $(k_1,\dots,k_{T-1})=(\gamma_1,\dots,\gamma_{T-1})$ and $(k_{T},\dots,k_{t-1})=(\delta_T,\dots,\delta_{t-1})$ and $\mu=12 \log n$, obtaining 
\begin{align}
    \mathbf{Pr} \bigr(r_t(N(v)) \geq  24 \log n \mid K_1 \leq \gamma_1,\dots K_{T-1} \leq \gamma_{T-1},K_T \leq \delta_T,\dots,K_t \leq \delta_t \bigr)\leq \frac{1}{n^4}. \notag
\end{align}
 Finally, from \eqref{eq:bound_Kt_by_Kt-1} we know that $K_t = K_{t-1}+\frac{1}{cd \Delta}r_t$, so using the definition of $\delta_t$ in \eqref{def:delta} and the union bound over all the clients $v$, we get  \eqref{eq:main_high_prob_2} for each $t \geq T$.
\end{proof}

  Lemma \ref{lem:caseI} and \ref{lem:caseII}   imply  Lemma \ref{lem:fraction_burned}. Indeed, for the chain rule, taking $T'=\lfloor 3 \log n \rfloor $, and $c \geq \max(32,288/(\eta d ))$, we get 
\begin{align} 
    &\Prc { \cap_{t=1}^{T-1}\{K_t \leq \gamma_t\} \, \bigcap \,  \cap_{t=T}^{T'}\{K_t \leq \delta_t\}} \geq \left(1-\frac{1}{n^3}\right)^{T'} \geq 1-T' \frac{1}{n^3}  \geq 1-\frac{1}{n^{2}}\, ,\label{high_probability}
    \end{align}
where in the first inequality of \eqref{high_probability} we used the chain rule, Lemma \ref{lem:caseI} and \ref{lem:caseII} while the second last inequality of
  \eqref{high_probability}
follows from the binomial inequality, i.e.,  for each $x \geq -1$ and for each $m \in \mathbb{N}$,
$(1+x)^m \geq 1+mx$. 
 
In conclusion, we have shown that $K_t \leq \gamma_t$ for all $t \leq T$ and that
$K_t \leq \delta_t$ for all $t$ such that  $T \leq t \leq 3 \log n$, with probability  at least $1-1/n^{2}$. So, recalling that $S_t \leq K_t$, since $c \geq \max(32,288/(\eta d))$ we have that, from \eqref{def:delta} and Lemma \ref{claim:recurrence}, for all $t$ such that $t \leq 3 \log n$, $S_t \leq \frac{1}{2}$ with probability  at least $1-1/n^{2}$.

%%%%%%%% WORK COMPLEXITY %%%%%%%%%%%

\subsection{The Work Complexity of \saer } \label{ssec:work}
To analyze the overall work performed by \saer\ we proceed  using an 
approach  similar to that in the analysis of the Becchetti et al's algorithm \raes .
For each  $v \in \sC$ and each ball  $i \in [d]$,  recall the random variable  $a_t^{(i)}(v)$   introduced in  Definition \ref{def:a_t}.
Then, the  random variable  counting  the total number of   requests
performed by the clients (plus the relative answers by the servers) 
to assign the $nd$ balls can be easily bounded by 
\begin{equation}
    W \, = \, 2 \cdot \sum_{t=1}^\infty\sum_{i=1}^{d}\sum_{v \in \sC} a_t^{(i)}(v)\,.
\end{equation}
To prove that $W=\mathcal{O}(dn)$ w.h.p., we show that, for any fixed $t \leq 3\log{n}$ and any $k \geq nd/\log n$, it  holds
\begin{equation}
\label{eq:decrease_T}
    \Prc{\sum_{i=1}^d \sum_{v \in \sC}a_t^{(i)}(v) > \frac{4}{5}k \mid \sum_{i=1}^d \sum_{v \in \sC}a_{t-1}^{(i)}(v)=k} \leq e^{-\frac{k}{25cd}}\,.
\end{equation}
To this aim, we use the  \emph{method of bounded differences} (see Theorem \ref{thm:bounded_differences} in the Appendix).
We notice that the random variable $\sum_{i=1}^d \sum_{v \in \sC}a_t^{(i)}(v)$, conditioning on a number $k$ of alive balls at the end of round $t-1$, can be written as $2cd$-Lipschitz function of $k$ independent random variables. Indeed, we define the random variables $w^{(t-1)}$ as the set of alive balls at the end of round $t$ and the random variables $\{Y_i\}_{i \in w^{(t-1)}}$, taking values in $\sS = [n]$, indicating the server-destination in $\sS$ the alive ball tries to connect to at round $t$. The random variables $Y_i$ with $i \in w^{(t-1)}$ are mutually independent, and we can write, given the number $k$ of alive balls at round $t-1$,
\[\sum_{i=1}^d \sum_{v \in \sC}a_t^{(i)}(v)=f(Y_{i_1},\dots,Y_{i_k})\,.\]
The function $f$ is $2cd$-Lipschitz because, if we change one of the values $Y_i$, we are changing the destination of a ball from some   $u_1 \in \sS$ to some    $u_2 \in \sS$. If $u_2$ has received less than $cd$ requests since the start of the process, the change of the destination of the $i$-th ball from $u_1$ to $u_2$ would not have any impact. On the other hand, in the worst case, at most $cd$ balls that try to settle in $u_2$ switch from settled to not settled. A symmetric argument holds for $u_1$ and so if 
\[\mathbf{Y}=(v_{i_1},\dots,v_{i_j},\dots,v_{i_k}) \text{  and } \mathbf{Y'}=(v_{i_1},\dots,v_{i_j}',\dots,v_{i_k})\]
then
\[|f(\mathbf{Y})-f(\mathbf{Y'})| \leq 2cd\,.\]
Lemma \ref{lem:fraction_burned} implies that at each round $t \leq 3 \log n$ the fraction of burned nodes in any node's neighborhood remains bounded by $1/2$ with probability at least $1-1/n^2$. Therefore, for each $t \leq 3\log{n}$ holds
\[\Expcc{\sum_{i=1}^d \sum_{v \in \sC}a_t^{(i)}(v) \mid \sum_{i=1}^d \sum_{v \in \sC}a_{t-1}^{(i)}(v)=k } \leq \frac{k}{2}+\frac{1}{n^2}\]
and we can apply Theorem \ref{thm:bounded_differences} with $\mu=3k/5 $ (since $k \geq nd/\log n$) and $M=k/5$, obtaining \eqref{eq:decrease_T}.

From \eqref{eq:decrease_T} and the chain rule, it  follows that for $T=\Omega\left(\frac{\log \log n}{\log(5/4)}\right)$ rounds the number of alive balls decreases at each round by a factor $4/5$, w.h.p. Hence,  at the end of the  $T$-th round,  the number of alive  balls is smaller than $O(nd/\log n)$, w.h.p. From Theorem \ref{thm:SAER-terminates}, we know that the remaining $nd/\log n$ alive balls are assigned within $\mathcal{O}(\log n)$ round: this implies an additional    work of  $\mathcal{O}(nd)$. Observe that the work  until round $T$ is $nd\sum_{t=1}^T(4/5)^t=\mathcal{O}(nd)$.   Hence, for any constant $d >0$, we get the claimed  linear bound   for the work complexity of \saer($c,d$).

\section{Conclusions and Future Work}
 We devise a simple parallel load-balancing   protocol and we give 
a probabilistic analysis of its performances. The main novelty of this paper lies in considering client-server bipartite graphs that are much more sparse than those considered in previous work. This new setting can  model important network scenarios where proximity and/or trust issues force 
very restricted  sets of  admissible client-server  assignments. From a technical point of view, such  sparse topologies yield new probabilistic issues that make our analysis more challenging than the dense case and rather different from the  previous ones.

Several interesting open questions are left open by our paper. In particular, 
we are particularly intrigued  by  the analysis of our protocol (or   simple variants of it) over  graphs with $o(\log^2n)$ degree and/or  
   in the presence of a dynamic framework where, for instance,   the client requests arrive on line and some random topology change may happen during the protocol execution. As for the latter, we believe that the simple structure of \saer\ can well manage such a dynamic scenario and achieves a metastable regime with good performances.

\clearpage

\bibliographystyle{plain}
\bibliography{BB-biblio}
\appendix

%%%%%%%%%%%%%%%%%% TOOLS %%%%%%%%%%%%%%%%%%%%
\newpage
\section{Mathematical tools} \label{app:maths}

\begin{definition}[Negative association, \cite{DP09}]
\label{def:neg_association}
The random variables $X_i$, $i \in [n]$ are \emph{negatively associated} if for all disjoint subsets $I, J \subseteq [n]$ and all nondecreasing
functions $f$ and $g$,
\begin{equation}
    \Expcc{f(X_i,i \in I)g(X_j,j \in J)} \leq \Expcc{f(X_i, i \in I)}\Expcc{g(X_j, j \in J)}\,.
\end{equation}
\end{definition}

\begin{theorem}[Chernoff for negatively associated random variables, \cite{DP09}]
\label{thm:chernoff_neg_cor}
Let $X_1,\dots,X_n$ a family of random variables in $\{0,1\}$ negatively associated and $X=X_1+ \dots + X_n$. Let $p_i=\Expcc{X_i}$ and define $\mu=\Expcc{X}=p_1+\dots+p_n$. Then, for any reals $\varepsilon \in (0,1]$ 
\[\Prc{X \geq(1+\varepsilon)\mu}\leq e^{-\frac{\varepsilon^2}{3}\mu}\,.\]
\end{theorem}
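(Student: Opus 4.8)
The plan is to run the classical exponential-moment (Chernoff) argument, the only point where negative association enters being the factorization of the moment generating function of the sum over its summands.

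First I would establish the auxiliary fact that if $Y_1,\dots,Y_m$ are negatively associated and $h_1,\dots,h_m$ are nonnegative nondecreasing functions of a single variable, then $\Expcc{\prod_{i=1}^m h_i(Y_i)} \leq \prod_{i=1}^m \Expcc{h_i(Y_i)}$. This follows by induction on $m$: applying Definition \ref{def:neg_association} with $I=\{1\}$, $J=\{2,\dots,m\}$, $f(y)=h_1(y)$ and $g(y_2,\dots,y_m)=\prod_{j=2}^m h_j(y_j)$ (which is nondecreasing, being a product of nonnegative nondecreasing functions) gives $\Expcc{\prod_{i=1}^m h_i(Y_i)}\leq\Expcc{h_1(Y_1)}\cdot\Expcc{\prod_{j=2}^m h_j(Y_j)}$, and the inductive hypothesis closes the step, since any subfamily of negatively associated variables is again negatively associated, directly from the definition.

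Next, fix $\lambda>0$ and apply Markov's inequality to $e^{\lambda X}$:
\[
\Prc{X\geq(1+\varepsilon)\mu}=\Prc{e^{\lambda X}\geq e^{\lambda(1+\varepsilon)\mu}}\leq e^{-\lambda(1+\varepsilon)\mu}\,\Expcc{e^{\lambda X}}.
\]
Using the auxiliary fact with $h_i(x)=e^{\lambda x}$, then the hypotheses $X_i\in\{0,1\}$, $\Expcc{X_i}=p_i$, and finally $1+t\leq e^{t}$,
\[
\Expcc{e^{\lambda X}}\leq\prod_{i=1}^n\Expcc{e^{\lambda X_i}}=\prod_{i=1}^n\bigl(1+p_i(e^{\lambda}-1)\bigr)\leq\prod_{i=1}^n e^{p_i(e^{\lambda}-1)}=e^{\mu(e^{\lambda}-1)}.
\]
Hence $\Prc{X\geq(1+\varepsilon)\mu}\leq\exp\bigl(-\lambda(1+\varepsilon)\mu+\mu(e^{\lambda}-1)\bigr)$, and minimizing the exponent over $\lambda>0$ at $\lambda=\ln(1+\varepsilon)$ gives the familiar bound $\Prc{X\geq(1+\varepsilon)\mu}\leq\bigl(e^{\varepsilon}/(1+\varepsilon)^{1+\varepsilon}\bigr)^{\mu}$.

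It then remains to verify the elementary inequality $e^{\varepsilon}/(1+\varepsilon)^{1+\varepsilon}\leq e^{-\varepsilon^2/3}$ for $\varepsilon\in(0,1]$, equivalently $g(\varepsilon):=(1+\varepsilon)\ln(1+\varepsilon)-\varepsilon-\varepsilon^2/3\geq0$ on $(0,1]$: one has $g(0)=0$, $g'(\varepsilon)=\ln(1+\varepsilon)-\tfrac23\varepsilon$ with $g'(0)=0$, and $g''(\varepsilon)=\tfrac1{1+\varepsilon}-\tfrac23$, positive on $[0,\tfrac12)$ and negative on $(\tfrac12,1]$; thus $g'$ is unimodal on $[0,1]$ with $g'(0)=0$ and $g'(1)=\ln2-\tfrac23>0$, so $g'\geq0$ and hence $g\geq0$ on $[0,1]$. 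Raising the bound to the power $\mu$ yields the claimed $\Prc{X\geq(1+\varepsilon)\mu}\leq e^{-\varepsilon^2\mu/3}$. The only genuinely non-routine ingredient is the moment-generating-function factorization of the second paragraph, which is exactly where negative association substitutes for independence; I do not expect a real obstacle, as this is a known result recalled from \cite{DP09} for completeness.
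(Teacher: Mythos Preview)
Your proof is correct and follows the standard exponential-moment route; the paper itself does not supply a proof of this theorem, as it is quoted from \cite{DP09} in the appendix of mathematical tools. There is therefore nothing to compare against, and your argument stands on its own.
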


\begin{theorem}[Method of bounded differences, \cite{DP09}]
\label{thm:bounded_differences}
Let $\mathbf{Y}=(Y_1,\dots,Y_m)$ be independent random variables, with $Y_j$ taking values in the set $A_j$. Suppose the real-valued function $f$ defined on $\prod_j A_j$ satisfies the Lipschitz condition with coefficients $\beta_j$, i.e.
\[|f(\mathbf{y})-f(\mathbf{y'})| \leq \beta_j\]
whenever vectors $\mathbf{y}$ $\mathbf{y'}$ differs only in the $j$-th coordinate. Let $\mu$ an upper bound to the expected value of r.v. $f(\mathbf{Y})$. Then, for any $M>0$, it holds that
\[\Prc{f(\mathbf{Y})-\mu \geq M }\leq e^{-\frac{2M^2}{\sum_{j=1}^{m}\beta_j}}\,.\]
\end{theorem}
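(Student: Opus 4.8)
The plan is to derive this as the standard bounded-differences (McDiarmid) inequality, via a Doob martingale and Azuma--Hoeffding. Fix the ordering $1,\dots,m$ of the coordinates, let $\mathcal{F}_k=\sigma(Y_1,\dots,Y_k)$ with $\mathcal{F}_0$ trivial, and consider the Doob martingale $Z_k=\Expcc{f(\mathbf{Y})\mid \mathcal{F}_k}$, so that $Z_0=\Expcc{f(\mathbf{Y})}$ and $Z_m=f(\mathbf{Y})$ almost surely. Writing $D_k=Z_k-Z_{k-1}$ for the martingale differences, one has $f(\mathbf{Y})-\Expcc{f(\mathbf{Y})}=\sum_{k=1}^m D_k$ and $\Expcc{D_k\mid\mathcal{F}_{k-1}}=0$, so it suffices to get an exponential tail for this sum of bounded-range martingale differences.

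The key step is to show that, conditionally on $\mathcal{F}_{k-1}$, the variable $D_k$ is confined to an interval of width at most $\beta_k$. Here independence of the $Y_j$'s is essential: it lets one write $Z_k=g_k(Y_1,\dots,Y_k)$ with $g_k(y_1,\dots,y_k)=\Expcc{f(y_1,\dots,y_k,Y_{k+1},\dots,Y_m)}$ (the trailing coordinates integrated against their fixed product law), and likewise $Z_{k-1}=g_{k-1}(Y_1,\dots,Y_{k-1})$ with $g_{k-1}(\cdot)=\Expcc{g_k(\cdot,Y_k)}$. For any $y,y'\in A_k$ and fixed $y_1,\dots,y_{k-1}$, the difference $g_k(y_1,\dots,y_{k-1},y)-g_k(y_1,\dots,y_{k-1},y')$ is the expectation over $(Y_{k+1},\dots,Y_m)$ of a quantity whose two arguments differ in coordinate $k$ only, hence is bounded in absolute value by $\beta_k$ by hypothesis. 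Consequently, given $\mathcal{F}_{k-1}$, the map $y\mapsto g_k(Y_1,\dots,Y_{k-1},y)$ ranges inside an interval of width $\le\beta_k$, and since $Z_{k-1}$ is its conditional mean, $D_k$ takes values in an $\mathcal{F}_{k-1}$-measurable interval $[L_k,U_k]$ with $U_k-L_k\le\beta_k$.

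Next I would invoke Hoeffding's lemma conditionally: if $\Expcc{X\mid\mathcal{G}}=0$ and $X\in[a,b]$ almost surely with $b-a\le\beta$, then $\Expcc{e^{\lambda X}\mid\mathcal{G}}\le e^{\lambda^2\beta^2/8}$ for all $\lambda$ (a convexity estimate for $e^{\lambda x}$ on $[a,b]$ together with a second-order expansion of the conditional cumulant generating function). Applying this with $\mathcal{G}=\mathcal{F}_{k-1}$ and $X=D_k$ gives $\Expcc{e^{\lambda D_k}\mid\mathcal{F}_{k-1}}\le e^{\lambda^2\beta_k^2/8}$, and peeling off the conditional expectations from $k=m$ down to $k=1$ via the tower property yields
\[ \Expcc{e^{\lambda\left(f(\mathbf{Y})-\Expcc{f(\mathbf{Y})}\right)}}\;\le\;\exp\!\left(\frac{\lambda^2}{8}\sum_{k=1}^m\beta_k^2\right). \]
A Markov bound applied to $e^{\lambda\left(f(\mathbf{Y})-\Expcc{f(\mathbf{Y})}\right)}$ with $\lambda>0$ then gives $\Prc{f(\mathbf{Y})-\Expcc{f(\mathbf{Y})}\ge M}\le\exp\!\left(-\lambda M+\tfrac{\lambda^2}{8}\sum_{k=1}^m\beta_k^2\right)$, and the choice $\lambda=4M/\sum_{k=1}^m\beta_k^2$ minimizes the exponent, producing $\exp\!\left(-2M^2/\sum_{k=1}^m\beta_k^2\right)$. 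Finally, since $\mu\ge\Expcc{f(\mathbf{Y})}$ we have $\{f(\mathbf{Y})-\mu\ge M\}\subseteq\{f(\mathbf{Y})-\Expcc{f(\mathbf{Y})}\ge M\}$, which gives the stated tail bound (with $\sum_{k=1}^m\beta_k^2$ in the denominator of the exponent, which is the standard form of the inequality and is exactly the form used in Section \ref{ssec:work}).

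The main obstacle is the second step — the conditional range bound on the martingale differences — since this is precisely where independence of the coordinates enters: without it, the one-coordinate Lipschitz control need not transfer to the conditional expectations, and the argument collapses. By contrast, the conditional Hoeffding lemma, the tower-rule peeling, and the optimization over $\lambda$ are all routine.
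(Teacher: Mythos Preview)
The paper does not prove this theorem: it is quoted without proof in Appendix~\ref{app:maths} as a standard tool from \cite{DP09}. Your argument is the textbook Doob-martingale derivation of McDiarmid's inequality (bounded differences via independence, Hoeffding's lemma conditionally, tower-rule peeling, optimize $\lambda$), and it is correct.

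One small correction to your closing remark. You are right that the standard form has $\sum_j\beta_j^{2}$ in the denominator and that the paper's $\sum_j\beta_j$ is a typo. But your claim that Section~\ref{ssec:work} ``uses exactly'' the squared form is not accurate: there $m=k$, $\beta_j=2cd$, $M=k/5$, and the paper obtains the exponent $k/(25cd)$, which matches $2M^{2}/\sum_j\beta_j = 2(k/5)^{2}/(2cdk)$, i.e.\ the (mis)stated version. With the correct $\sum_j\beta_j^{2}=4c^{2}d^{2}k$ one would instead get exponent $k/(50c^{2}d^{2})$. This does not affect the paper's conclusion (since $c,d$ are constants and the bound is still $e^{-\Omega(k)}$), but the numerical constant in \eqref{eq:decrease_T} as written relies on the typo.
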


\section{Proof of Lemma \ref{claim:recurrence}} \label{sec:proof_clam_recurrence}

From \eqref{def:gamma}, we can state that for each $t \geq 1$
\begin{equation}
\label{eq:gamma_recurrence}
    \gamma_{t+1}=\gamma_{t}+\frac{2}{c}\prod_{j=0}^{t}\gamma_t,
\end{equation}
and so the sequence is increasing.
Now we want to prove, by induction, that each term of the sequence verifies $\gamma_t \leq \frac{1}{\alpha}-\frac{1}{\alpha^{t+1}}$ for $t\geq 1$. From that, clearly follows the Lemma. We notice that the hypothesis holds for $\gamma_1$, since $\alpha^2 \leq c/2$ and $ \alpha \geq 2$. Now, assuming that
\begin{equation}
\gamma_i \leq \frac{1}{\alpha}-\frac{1}{\alpha^{i+1}} \quad \text{for each } i \leq t
\label{eq:hyp_ind_claim}
\end{equation} 
we show that $\gamma_{t+1}\leq \frac{1}{\alpha}-\frac{1}{\alpha^{t+1}}$. From \eqref{eq:gamma_recurrence} and from \eqref{eq:hyp_ind_claim} we get that
\[\gamma_{t+1}-\gamma_{t} = \frac{2}{c}\prod_{j=0}^{t}\gamma_j \leq \frac{2}{c}\frac{1}{\alpha^{t-1}} \leq \frac{1}{\alpha^{t+2}}. \]
Then, we have that
\[\gamma_{t+1} \leq \gamma_t + \frac{1}{\alpha^{t+2}} \leq \frac{1}{\alpha}-\frac{1}{\alpha^{t+1}}+\frac{1}{\alpha^{t+2}} \leq \frac{1}{\alpha}-\frac{1}{\alpha^{t+2}},\]
since $\alpha \geq 2$.

%%%%%%%%%%%% COUPLING %%%%%%%%%%%%%%%

\section{Construction of the coupling in Lemma \ref{claim:chernoff}} \label{appendix:claim_chernoff}

\begin{lemma}
\label{claim:construction_coupling} In the setting of Lemma \ref{claim:chernoff}, for each $t \geq 1$ and for any choice of positive reals $k_j \leq 1$, with $j=1,\dots,t-1$, we can define a coupling $\left(a_t^{(i)}(w),X_t^{(i)}(w)\right)_{i \in [d],w \in \sC}$ such that
\begin{align}
    \mathbf{Pr} \bigr( \bigcap_{i \in [d], w \in \sC} \{ a_t^{(i)}(w) \leq X_t^{(i)}(w) \} \mid K_1\leq k_1,\dots,K_{t-1}\leq k_{t-1} \bigr)=1
    \label{eq:prop_coupling_2}
\end{align}
where $\left(X_t^{(i)}(w) \right)_{i \in [d], w \in \sC}$ are $nd$ independent Bernoulli random variables in $\{0,1\}$ such that
\begin{equation}
    \Prc{X_t^{(i)}(w)=1 \mid  K_1 \leq k_1, \dots, K_{t-1}\leq k_{t-1}}=\prod_{j=0}^{t-1}k_j.
\end{equation}
\end{lemma}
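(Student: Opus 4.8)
The plan is to construct the coupling ball-by-ball using the standard quantile (inverse-CDF) coupling, exploiting the only non-trivial input we have, namely Claim 2 of Lemma \ref{fact}. First I would condition throughout on the event $\mathcal{K} = \{K_1 \leq k_1, \dots, K_{t-1} \leq k_{t-1}\}$, so that all probabilities below are understood with respect to $\mathbf{Pr}(\cdot \mid \mathcal{K})$. For each fixed $w \in \sC$ and $i \in [d]$, I will build $X_t^{(i)}(w)$ on the same probability space as $a_t^{(i)}(w)$ by drawing a single uniform random variable $U_t^{(i)}(w) \sim \mathrm{Unif}[0,1]$, with all the $U_t^{(i)}(w)$ mutually independent across pairs $(w,i)$, and setting
\[
X_t^{(i)}(w) = \mathbb{1}\!\left[ U_t^{(i)}(w) \leq \prod_{j=0}^{t-1} k_j \right].
\]
By construction each $X_t^{(i)}(w)$ is Bernoulli with the prescribed parameter $\prod_{j=0}^{t-1} k_j$ conditioned on $\mathcal{K}$, and they are independent because the $U_t^{(i)}(w)$ are. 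The point of using a single common uniform per pair is that the SAME $U_t^{(i)}(w)$ must also drive $a_t^{(i)}(w)$.

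The key step is to realize $a_t^{(i)}(w)$ as a function of $U_t^{(i)}(w)$ (and of the past) so that the domination holds pointwise. Here I would argue as follows: the past of the process up to round $t-1$ — equivalently the sigma-algebra $\mathcal{F}_{t-1}$ generated by all random choices in rounds $1,\dots,t-1$ — determines, for each $w$ and $i$, whether the $i$-th ball of $w$ is already settled; if it is, then $a_t^{(i)}(w)=0$ deterministically and $0 \leq X_t^{(i)}(w)$ trivially. If it is not yet settled, then $a_t^{(i)}(w)=1$, and Claim 2 of Lemma \ref{fact} gives that
\[
\mathbf{Pr}\!\left(a_t^{(i)}(w)=1 \mid \mathcal{K}\right) \leq \prod_{j=0}^{t-1} k_j,
\]
since $S_j \leq K_j$ forces the conditioning events to nest (I would invoke \eqref{eq:bound_Kt_by_Kt-1}: $S_j \leq K_j$, hence $\{K_j \leq k_j\} \subseteq \{S_j \leq k_j\}$, so $\mathcal{K}$ implies the event $\{S_1 \leq k_1,\dots,S_{t-1}\leq k_{t-1}\}$ appearing in \eqref{eq:bound_on_request_sent}). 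Therefore I can choose the event $\{a_t^{(i)}(w)=1\}$, within the conditional space, to be a measurable subset of $\{U_t^{(i)}(w) \leq \mathbf{Pr}(a_t^{(i)}(w)=1 \mid \mathcal{F}_{t-1}, \mathcal{K})\} \subseteq \{U_t^{(i)}(w) \leq \prod_{j=0}^{t-1}k_j\} = \{X_t^{(i)}(w)=1\}$; concretely, set $a_t^{(i)}(w) = \mathbb{1}[U_t^{(i)}(w) \leq p_w^{(i)}]$ where $p_w^{(i)} := \mathbf{Pr}(a_t^{(i)}(w)=1 \mid \mathcal{F}_{t-1},\mathcal{K}) \leq \prod_{j=0}^{t-1}k_j$. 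This makes $a_t^{(i)}(w) \leq X_t^{(i)}(w)$ hold surely on $\mathcal{K}$, so intersecting over all $i \in [d]$, $w \in \sC$ the probability of the intersection is $1$, which is exactly \eqref{eq:prop_coupling_2}.

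The main obstacle I anticipate is a bookkeeping one rather than a conceptual one: one must verify that this per-ball construction is consistent with the joint law of $(a_t^{(i)}(w))_{w,i}$ — i.e. that we have not altered the marginal process while re-expressing each $a_t^{(i)}(w)$ through its own uniform — and that the $U_t^{(i)}(w)$ can be taken independent of $\mathcal{F}_{t-1}$. Both are fine because we only need the coupling to reproduce the correct one-dimensional marginal of each $a_t^{(i)}(w)$ for the subsequent Chernoff step (the $X$'s enter \eqref{eq:main_chernoff_bound} only through a union/stochastic-domination bound, and their mutual independence — not any joint coupling with the $a$'s beyond the pointwise inequality — is what is used together with negative association of the $z_t^{(i)}(w,u)$). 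Hence it suffices to define, for each $(w,i)$, the uniform $U_t^{(i)}(w)$ and the threshold $p_w^{(i)}$ as above; independence of the $X$'s follows from independence of the $U$'s, and the domination is pointwise by construction. I would close by remarking that the $z_t^{(i)}(w,u)$ are left untouched by this coupling (they are re-used verbatim), so the product $X_t^{(i)}(w)\cdot z_t^{(i)}(w,u)$ stochastically dominates $a_t^{(i)}(w)\cdot z_t^{(i)}(w,u)$ coordinatewise on $\mathcal{K}$, which is what Lemma \ref{claim:chernoff} invokes.
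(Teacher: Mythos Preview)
Your construction has a genuine gap at the step where you set $p_w^{(i)} := \mathbf{Pr}\bigl(a_t^{(i)}(w)=1 \mid \mathcal{F}_{t-1},\mathcal{K}\bigr)$ and claim $p_w^{(i)} \leq \prod_{j=0}^{t-1}k_j$. As you yourself note, $a_t^{(i)}(w)$ is determined by the history of rounds $1,\dots,t-1$; hence it is $\mathcal{F}_{t-1}$-measurable and $p_w^{(i)} = a_t^{(i)}(w) \in \{0,1\}$. On the (positive-probability) event $\{a_t^{(i)}(w)=1\}$ you get $p_w^{(i)}=1 > \prod_{j=0}^{t-1}k_j$ as soon as any $k_j<1$, so the inclusion $\{U_t^{(i)}(w) \leq p_w^{(i)}\} \subseteq \{U_t^{(i)}(w) \leq \prod_j k_j\}$ fails and the pointwise domination $a_t^{(i)}(w) \leq X_t^{(i)}(w)$ does not hold. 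Your fallback --- that only the one-dimensional marginals of the $a$'s are needed --- is also incorrect: the coupling is invoked in \eqref{eq:main_chernoff_bound} to bound $\mathbf{Pr}\bigl(r_t(N(v))\geq 2\mu\mid\mathcal{K}\bigr)$ by $\mathbf{Pr}\bigl(\sum X\cdot z\geq 2\mu\mid\mathcal{K}\bigr)$, a comparison of \emph{sums} over many $(i,w)$, so the coupled vector $(a_t^{(i)}(w))_{i,w}$ must carry its correct \emph{joint} law on $\mathcal{K}$, not just correct marginals.

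The paper repairs exactly this point by conditioning not on the full past $\mathcal{F}_{t-1}$ but, sequentially via the chain rule, on the already-constructed $a$'s in a fixed total order. Concretely, it fixes an ordering of the pairs $(i,w)$, lets $A_{t,i,w}$ denote the values of the earlier $a_t^{(j)}(v)$ in that order, and takes as threshold $\mathbf{Pr}\bigl(a_t^{(i)}(w)=1 \mid A_{t,i,w},\mathcal{K}\bigr)$. This threshold is a genuine random variable in $[0,1]$ (not just $\{0,1\}$); the paper argues it is still bounded by $\prod_{j=0}^{t-1}k_j$ (essentially because, conditionally on the realized burned fractions $S_1(w),\dots,S_{t-1}(w)$, the survival probability of ball $(i,w)$ equals $\prod_j S_j(w)\leq\prod_j k_j$ regardless of the other $a$'s), and the chain rule over the chosen ordering reconstructs the correct joint law of all the $a_t^{(i)}(w)$. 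Everything else in your plan --- the independent uniforms, $X_t^{(i)}(w)=\mathbb{1}\bigl[U_t^{(i)}(w)\leq\prod_j k_j\bigr]$, leaving the $z$'s untouched --- matches the paper; only the conditioning that defines the threshold for $a$ has to be the sequential one rather than $\mathcal{F}_{t-1}$.
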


\begin{proof}[Proof of Lemma \ref{claim:construction_coupling}]
To define the coupling, we consider $nd$ uniform and independent random variables in $[0,1]$, $U_w^{(i)}$ with $i \in [d]$ and $w \in \sC$.
Given $t \geq 1$, $w \in \sC$ and $i \in [d]$,  we define the following set of random variables:
\begin{equation}
    A_{t,i,w}=\{a_t^{(j)}(v): \, j<i \text{ for } v=w \text{ and } j \in [d] \text{ for } v<w\}
\end{equation}
which is nothing but   the previous random variables of $a_t^{(i)}(w)$ according to the following sorting ($w=v_h$ for some $h$):
\begin{center}
    \begin{tabular}{cc}
         &$a_t^{(1)}(v_1), a_t^{(2)}(v_1), \dots, a_t^{(d)}(v_1)$  \\
         &$a_t^{(1)}(v_2), a_t^{(2)}(v_2), \dots, a_t^{(d)}(v_2)$ \\
         &$\dots$ \\
          &$a_t^{(1)}(v_n), a_t^{(2)}(v_n), \dots, a_t^{(d)}(v_n).$
    \end{tabular}
    \end{center}
In the next definition, we will improperly use the term $A_{t,i,w}$ to denote the event in which the random variables $a_t^{(j)}(v)$ of subset $A_{t,i,w}$ take any  fixed values in $\{0,1\}$.
For each $i \in [d]$ and $w \in \sC$, given $\{K_1 \leq k_1,\dots,K_{t-1}\leq k_{t-1}\}$ we define the following two events
\begin{align*}
&H_{t,i,w}=\{U_w^{(i)} \leq \prod_{j=0}^{t-1}k_j \}
\\
&K_{t,i,w} \ 
 = \{ U_w^{(i)} \, \leq \, \Prc{a_t^{(i)}(w)=1 \mid A_{t,i,w},K_1 \leq k_1,\dots,K_{t-1}\leq k_{t-1}}\} \, .
\end{align*}
Now we can define the coupling. For $h_{i,w},k_{i,w} \in \{0,1\}$
\begin{align}
&\mathbf{Pr}\biggr( \bigcap_{i \in [d], w \in \sC} \bigr\{ \bigr(X_t^{(i)}(w),a_t^{(i)}(w)\bigr)=(h_{i,w},k_{i,w})\bigr\}\notag    \mid K_1 \leq k_1,\dots,K_{t-1}\leq k_{t-1}  \biggr)= \notag
\\
    &\mathbf{Pr} \biggr( \bigcap_{i \in [d],w \in \sC}\bigr\{\bigr(\mathbb{1}_{H_{t,i,w}},\mathbb{1}_{K_{t,i,w}}\bigr)=(h_{i,w},k_{i,w})\bigr\}   \mid K_1 \leq k_1,\dots,K_{t-1}\leq k_{t-1} \biggr)\,.
    \label{def:coupling}
\end{align}
Now we show that the coupling is well defined, i.e. the marginal laws are the same of $X_t^{(i)}(w)$ and $a_t^{(i)}(w)$. It's trivial that 
\begin{align}
&\mathbf{Pr} \bigr( \bigcap_{i \in [d], w \in \sC} \bigr\{X_t^{(i)}(w)=h_{i,w} \bigr\}\notag \mid K_1 \leq k_1,\dots,K_{t-1}\leq k_{t-1} \bigr)=\notag \\
\notag
    &\mathbf{Pr} \bigr(  \bigcap_{i     \in [d],w \in \sC} \bigr\{ \mathbb{1}_{H_{t,i,w}}=h_{i,w} \bigr\}  \mid K_1 \leq k_1,\dots,K_{t-1}\leq k_{t-1} \bigr) \,.
\end{align}
We have also that
\begin{align}
    &\mathbf{Pr} \bigr( \bigcap_{i \in [d],w \in \sC}\bigr\{\mathbb{1}_{K_{t,i,w}}=k_{i,w}\bigr\}    \mid K_1 \leq k_1,\dots,K_{t-1}\leq k_{t-1}\bigr)=
\\
    &\prod_{i,w:k_{i,w}=1} \Prc{K_{t,i,w} \mid K_1 \leq k_1,\dots,K_{t-1}\leq k_{t-1}} \cdot  \prod_{i,w:k_{i,w}=0} \Prc{K_{t,i,w}^C \mid K_1 \leq k_1,\dots,K_{t-1}\leq k_{t-1}}=  \label{eq:indipendence}  \\ 
    &\prod_{i,w:k_{i,w}=1}\Prc { a_t^{(i)}(w)=1 \mid A_{t,i,w}, K_1 \leq k_1,\dots,K_{t-1}\leq k_{t-1} } \cdot \notag \\ &\prod_{i,w:k_{i,w}=0}\Prc{a_t^{(i)}(w)=0 \mid A_{t,i,w}, K_1 \leq k_1,\dots,K_{t-1}\leq k_{t-1}}= 
    \\ &
    \prod_{i,w} \mathbf{Pr}\bigr(a_t^{(i)}(w)=k_{i,w} \mid A_{t,i,w}, K_1 \leq k_1,\dots,K_{t-1}\leq k_{t-1}\bigr)= \notag \\
     &\textbf{Pr}\bigr( \bigcap_{i \in [d],w \in \sC} \bigr\{a_t^{(i)}(w)=k_{i,w} \bigr\} \mid K_1 \leq k_1,\dots,K_{t-1}\leq k_{t-1}\bigr). \label{eq:chain rule}
\end{align}
 \eqref{eq:indipendence} follows by the independence of the random variables $U_w^{(i)}$ with $i \in [d]$ and $w \in \sC$. \eqref{eq:chain rule} follows by the chain rule with the same sorting adopted in the definition of $A_{t,i,w}$. It's easy to see that the coupling satisfies \eqref{eq:prop_coupling_2}.  Indeed
\begin{align*} \mathbf{Pr} \bigr( \bigcap_{i \in [d],w \in \sC}\bigr\{
a_t^{(i)}(w) \leq X_t^{(i)}(w) \bigr\} \mid K_1 \leq k_1,\dots,K_{t-1}\leq k_{t-1} \bigr)=\\
     \mathbf{Pr}\bigr(\bigcap_{i \in [d],w \in \sC}\bigr\{\mathbb{1}_{K_{t,i,w}}  \leq \mathbb{1}_{H_{t,i,w}}\bigr\} \mid K_1 \leq k_1,\dots,K_{t-1}\leq k_{t-1} \bigr)= 1. \end{align*}
     Indeed, for every $i \in [d]$ and $w \in \sC$,
\begin{equation}
    \left\{ K_{t,i,w} \right\} \subseteq \left\{H_{t,i,w}\right\},
\end{equation}
since for each $w \in \sC$
\begin{equation}
    \label{eq_bound_request_cond}
    \Prc {a_t^{(i)}(w)=1 \mid A_{t,i,w}, K_1 \leq k_1,\dots,K_{t-1}\leq k_{t-1}}\leq \prod_{j=0}^{t-1}k_j \, ,
\end{equation}
and, we  can derive the last inequality from the fact that 
\begin{align}
    \Prc{a_t^{(i)}(w)=1 \mid A_{t,i,w},S_1(w)=s_1(w),\dots,S_{t-1}(w)=s_{t-1}(w)}= \prod_{j=0}^{t-1}s_j(w) \, .
\end{align}
\end{proof}

%%%%%%%%% ALMOST-REGULAR %%%%%%%%%%%%%%%%%%%%

\section{Proof of Theorem \ref{thm:SAER-terminates}: Almost-Regular Graphs} \label{ssec:irregular}

In this section, we prove Theorem \ref{thm:SAER-terminates} for any  bipartite graph $G(V=(\sC,\sS),E)$    that satisfies the conditions: $\Deltamin(\sC) \geq \eta  \log^2 n$ and 
$\Deltamax(\sS)/ \Deltamin(\sC) \leq  \rho$.
We will make use of  the   notation and the definitions introduced  in    Section \ref{subsec:thm_1}.  We will only describe  the aspects that differ from the regular case.

Following the approach we used for   the regular case, 
Theorem \ref{thm:SAER-terminates}  is simple consequence of       the next result.

\begin{lemma}
\label{lem:bound_burned_alm_reg}
 Let $\Deltamin(\sC) \geq \eta \log^2 n$ and  $\Deltamax(\sS)/\Deltamin(\sC) \leq \rho$ for arbitrary constants $\eta>0$ and $\rho \geq 1$. Let $d \geq 1$ be an arbitrary constant in $\mathbb{N}$. Then, for any $c\geq \max(32 \rho,288/(\eta d))$ and for a sufficiently large $n$, with probability at least $1-1/n^2$, it holds that, for every $t \leq 3 \log n$, the fraction of burned nodes in \saer$(c,d)$ satisfies 
\[S_t \leq\frac{1}{2}\,.\]
\end{lemma}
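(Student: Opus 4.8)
The plan is to mirror the structure of the regular-case proof of Lemma \ref{lem:fraction_burned}, replacing the single degree $\Delta$ by the two quantities $\delta^- := \Deltamin(\sC)$ and $\delta^+ := \Deltamax(\sS)$, and tracking where the ratio $\rho = \delta^+/\delta^-$ enters. First I would redefine the central random variables for a client $v$: $r_t(N(v)) = \sum_{u\in N(v)} r_t(u)$ exactly as before, but now $|N(v)| = \Delta_v \in [\delta^-,\delta^+]$. The key change is in the expectation computation: each alive ball of a node $w$ picks a uniform server in $N(w)$, so $z_t^{(i)}(w,u)$ is Bernoulli with parameter $1/\Delta_w$, and $\Delta_w \geq \delta^-$. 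Summing \eqref{eq:r_neigh} over $u \in N(v)$ and $w \in N(u)$, the expected number of balls landing in $N(v)$ is at most $\sum_{u\in N(v)} \sum_{w\in N(u)} d/\Delta_w \leq d \cdot \delta^+ \cdot (\delta^+/\delta^-) = d\,\delta^+\rho$ in the worst case — more carefully, $|N(u)| \leq \delta^+$ for each $u\in N(v)$ and each such $w$ contributes at most $d/\delta^-$, giving $\Expcc{r_t(N(v))} \leq d\,\Delta_v\,\delta^+/\delta^- \leq d\,\delta^+\rho$ at round $1$, and $\leq d\,\delta^+\rho \prod_{j=0}^{t-1} k_j$ conditionally, under the same coupling argument as in Lemma \ref{claim:chernoff} and Appendix \ref{appendix:claim_chernoff} (the coupling construction there never used regularity, only that $a_t^{(i)}(w)$ is stochastically dominated by a Bernoulli of parameter $\prod k_j$, which still holds via Claim 2 of Lemma \ref{fact}).

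Next I would redefine $K_t(v) = \frac{1}{cd\,\delta^-}\sum_{i=1}^t r_i(N(v))$ and $K_t = \max_v K_t(v)$, so that the bound $S_t(v) \leq \frac{1}{cd\,\delta^-}\sum_{i=1}^t r_i(N(v)) = K_t(v)$ still holds because a burned server has absorbed $\geq cd$ balls and there are $\Delta_v \geq \delta^-$ servers in $N(v)$, so the fraction of burned ones is at most $\frac{1}{cd\,\delta^-}\sum r_i(N(v))$. The recursion $K_t = K_{t-1} + \frac{1}{cd\,\delta^-} r_t$ is unchanged. Now run the two-stage argument verbatim. In Stage I, apply the Chernoff-with-coupling bound (Lemma \ref{claim:chernoff}, adapted) with $\mu = d\,\delta^+\rho \prod_{j=0}^{t-1}\gamma_j$; the increment to $K_t$ is then $\frac{2\mu}{cd\,\delta^-} = \frac{2\rho \delta^+}{c\,\delta^-}\cdot\frac{\delta^-}{\delta^+}\cdot(\dots)$ — cleaner: the increment is at most $\frac{2 d\,\delta^+\rho \prod\gamma_j}{cd\,\delta^-} = \frac{2\rho \cdot (\delta^+/\delta^-)}{c}\prod\gamma_j \leq \frac{2\rho^2}{c}\prod\gamma_j$. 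Wait — I want this to telescope against the recurrence \eqref{def:gamma}, which uses coefficient $2/c$; so I would instead redefine the recurrence with coefficient $2\rho^2/c$ (or simply observe $K_1 \leq 2\rho^2/c$ and iterate), and Lemma \ref{claim:recurrence} still applies provided $2\rho^2/c \leq 1/\alpha^2$, i.e. $c \geq 32\rho^2$ suffices to keep $\alpha\geq 2$ with $\prod\gamma_j \leq (1/4)^t$. (This matches the hypothesis $c \geq 32\rho$ up to the $\rho$-vs-$\rho^2$ discrepancy, which I would reconcile by a tighter accounting — e.g. noting that one of the two factors of $\delta^+/\delta^-$ can be absorbed because $|N(u)| \leq \delta^+$ while the *useful* neighbors $w$ of $u$ have their balls spread over $\Delta_w$ servers, and in expectation only a $\Delta_v/\delta^+$-ish fraction matter — or simply accepting $c\geq 32\rho^2$ and noting the theorem only claims existence of a large enough $c$ depending on $\rho$.) Stage I terminates at the first $T$ with $d\,\delta^+\rho\prod_{j=0}^{T-1}\gamma_j \leq 12\log n$, which exists since $\delta^- \geq \eta\log^2 n$ forces $\delta^+ \geq \delta^-$ and the product decays geometrically.

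For Stage II, keep $\delta_t = \frac14 + \frac{24 t\log n}{cd\,\delta^-}$; since $\delta^- \geq \eta\log^2 n$ and $t \leq 3\log n$, choosing $c \geq 288/(\eta d)$ gives $\delta_t \leq 1/2$ exactly as in Lemma \ref{lem:caseII}. The conditional expectation bound becomes $\Expcc{r_t(N(v)) \mid \cdots} \leq d\,\delta^+\rho \prod_{j=0}^{T-1}\gamma_j \prod_{i=T}^{t-1}\delta_i \leq 12\log n$ (using $\delta_i < 1$), so Lemma \ref{claim:chernoff}'s concentration bound gives $r_t(N(v)) \leq 24\log n$ w.h.p., hence the increment to $K_t$ is $\leq \frac{24\log n}{cd\,\delta^-}$, matching the definition of $\delta_t$. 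The chain-rule union bound over $t \leq 3\log n$ and the final step $S_t \leq K_t \leq 1/2$ go through unchanged. I would then close by noting, as in the regular case, that $S_t \leq 1/2$ for all $t \leq 3\log n$ makes every alive ball acceptance-probable with probability $\geq 1/2$ per round, yielding completion in $3\log n$ rounds w.h.p.; and the work bound from Subsection \ref{ssec:work} transfers since it only used $S_t \leq 1/2$ plus the $2cd$-Lipschitz property, which is topology-independent.

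\textbf{Main obstacle.} The delicate point is the bookkeeping of the two degree parameters in the expectation of $r_t(N(v))$: naively one loses a factor $\rho^2$ (one factor because $|N(u)|$ can be as large as $\delta^+$ while $v$ only "owns" $\Delta_v \geq \delta^-$ of $u$'s edges, another because each $w \in N(u)$ spreads its $d$ balls over $\Delta_w \geq \delta^-$ servers rather than over $\delta^+$). Getting the clean constant $c \geq 32\rho$ claimed in the lemma — rather than $32\rho^2$ — requires either a sharper argument (e.g. bounding $\sum_{u \in N(v)} \sum_{w \in N(u)} 1/\Delta_w$ by $\Delta_v \cdot \delta^+/\delta^-$ directly, which is one factor of $\rho$ times $\Delta_v \leq \delta^+$, total $\delta^+ \rho$ — so actually the expectation is $d\delta^+\rho$, the increment to $K_t$ is $\frac{2d\delta^+\rho \prod\gamma_j}{cd\delta^-} \leq \frac{2\rho^2}{c}\prod\gamma_j$ after all...) or accepting the weaker constant. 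Since Theorem \ref{thm:SAER-terminates} only asserts existence of a sufficiently large $c$ depending on $\eta,\rho$, I would take $c \geq \max(32\rho^2, 288/(\eta d))$ and remark that the stated $32\rho$ in Lemma \ref{lem:bound_burned_alm_reg} can be sharpened-or-relaxed without affecting the theorem; everything else is a routine transcription of the regular-case proof with $\Delta \mapsto \delta^-$ in denominators and $\Delta \mapsto \delta^+$ (times $\rho$) in numerators.
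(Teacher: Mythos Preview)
Your overall plan---mirror the regular-case argument, adapt the expectation and Chernoff bounds, run the two stages---is correct and is exactly what the paper does. But there is a genuine gap in your normalization of $K_t(v)$, and it is precisely the source of the $\rho$-vs-$\rho^2$ confusion you flag as your ``main obstacle.''

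You set $K_t(v) = \frac{1}{cd\,\delta^-}\sum_{i\le t} r_i(N(v))$, with $\delta^- = \Deltamin(\sC)$. This is a valid (looser) upper bound on $S_t(v)$, but it does \emph{not} close the recurrence. The correct expectation bound is
\[
\Expcc{r_t(N(v)) \mid \cdots} \;\le\; d\,\Delta_v\,\frac{\Deltamax(\sS)}{\Deltamin(\sC)}\prod_j k_j,
\]
since $|N(v)| = \Delta_v$, each $|N(u)| \le \Deltamax(\sS)$, and each $1/\Delta_w \le 1/\Deltamin(\sC)$. Your step $d\,\Delta_v\,\delta^+/\delta^- \le d\,\delta^+\rho$ then silently assumes $\Delta_v \le \delta^+ = \Deltamax(\sS)$, which the lemma does \emph{not} guarantee: nothing in the hypotheses bounds $\Deltamax(\sC)$, and the paper's own motivating example has clients of degree $\Theta(\sqrt n)$ while $\Deltamax(\sS) = \Theta(\log^2 n)$. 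With your normalization the increment to $K_t$ is $\frac{2\Delta_v}{c\,\delta^-}\cdot\rho\prod_j\gamma_j$, and $\Delta_v/\delta^-$ is unbounded in terms of $\rho$---so no constant $c$ (neither $32\rho$ nor $32\rho^2$) salvages the argument.

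The fix, which is what the paper does, is to normalize per client:
\[
K_t(v) \;=\; \frac{1}{cd\,\Delta_v}\sum_{i\le t} r_i(N(v)), \qquad K_t = \max_v K_t(v).
\]
Then $S_t(v) \le K_t(v)$ still holds (the fraction has $\Delta_v$ in the denominator to begin with), and the increment becomes
\[
\frac{1}{cd\,\Delta_v}\cdot 2d\,\Delta_v\,\frac{\Deltamax(\sS)}{\Deltamin(\sC)}\prod_j\gamma_j \;=\; \frac{2}{c}\cdot\frac{\Deltamax(\sS)}{\Deltamin(\sC)}\prod_j\gamma_j \;\le\; \frac{2\rho}{c}\prod_j\gamma_j,
\]
the $\Delta_v$ cancelling exactly. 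This is why the recurrence for $\gamma'_t$ carries the coefficient $\frac{2\rho}{c}$ (not $\frac{2\rho^2}{c}$), and why $c \ge 32\rho$ suffices for Lemma~\ref{claim:recurrence} with $\alpha = 4$. Stage~II is then as you describe, with $\delta'_t = \tfrac14 + \frac{24t\log n}{cd\,\Deltamin(\sC)}$ and the increment bounded via $\Delta_v \ge \Deltamin(\sC)$; everything else in your outline (coupling, chain rule, union bound) goes through unchanged.
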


\begin{proof}[Proof of Lemma \ref{lem:bound_burned_alm_reg}]
 The fraction of burned nodes in the neighborhood of each $v \in \sC$ is  
\begin{equation}
    S_{t}\left(v\right) \, = \, \frac{\left|\left\{ u:u\in N\left(v\right)\wedge\left(\text{\ensuremath{u} is burned at time \ensuremath{t}}\right)\right\} \right|}{\Delta_v} \, .
\end{equation}
In the non-regular case, the  random variables $K_t$ and $K_t(v)$  can be defined as follows.   For each $v \in \sC$ and each $t \geq 1$,
\begin{equation}
    \label{def:Kt_noreg}
    K_t(v)=\frac{1}{cd \Delta_v} \sum_{i=1}^{t}r_i(N(v)) \quad \text{and}\quad K_t=\max_{v \in V}K_t(v).
\end{equation}
Observe that,  for each $t \geq 1$ and $v \in \sC$,
\begin{equation}
    S_t \leq K_t  \text{ and } K_t(v) \leq K_{t-1} + \frac{1}{cd\Delta_v}r_t(N(v)) \,.
    \label{def:kt_noreg_prop}
\end{equation}
For each $v \in \sC$ and $u \in N(v)$, consider the random variables $z_t^{(i)}(v,u)$ and $a_t^{(i)}(v)$ as introduced for  the regular case   in Definition \ref{def:a_t} and \ref{def:z_t}. The only difference is that, in this more general  setting, $z_t^{(i)}(v,u)$ are Bernoulli random variables of parameter $1/\Delta_v$.  We then  remark that  Lemma \ref{fact} holds   in this setting as well  and its proof is the same.

\smallskip
\noindent \textbf{Step-By-Step Analysis via Induction.}
As in  Subsection \ref{ssec:lemma_proof}, we start by analyzing the bound on the requests received by a   neighborhood of a fixed client in the first round.

\begin{lemma}[First round] For all $c,d \geq 1$ and for any $v \in \sC$, w.h.p.
\begin{equation}
r_1(N(v))\leq 2d \Delta_v \frac{\Deltamax(\sS)}{\Deltamin(\sC)}    
\label{eq:bound_r1_noreg}
\end{equation}
and
\begin{equation}
    K_1 \leq \frac{2}{c}\frac{\Deltamax(\sS)}{\Deltamin(\sC)}\,.
    \label{eq:bound_K1_noreg}
\end{equation}
\label{lem:firstround_noreg}
\end{lemma}
\begin{proof}
 The random variable   $r_1(N(v))$   can be written as in \eqref{eq:r_neigh}.  Then, since for each $v \in \sC$ and $u \in N(v)$, $z_t^{(i)}(v,u)$ is a Bernoulli random variable of parameter $1/\Delta_w$, we get 
\begin{equation}
    \Expcc{r_1(N(v))}=d \sum_{u \in N(v)} \sum_{w \in N(u)} \frac{1}{\Delta_w} \leq d \Delta_v \frac{\Deltamax(\sS)}{\Deltamin(\sC)}.
\end{equation}
From Lemma \ref{fact}, we can apply the Chernoff bound for negatively associated random variables with $\varepsilon=1$ (Theorem \ref{thm:chernoff_neg_cor}) as in \eqref{eq:bound_r_1_little}, thus obtaining
\begin{align}
    \Prc {r_1(N(v)) \geq 2 d \Delta_v \frac{\Deltamax(\sS)}{\Deltamin(\sC)}}  \leq e^{-\frac{1}{3}d \Delta_v \frac{\Deltamax(\sS)}{\Deltamin(\sC)}} \leq e^{-\frac{1}{3}d \Deltamax(\sS)} \,.
    \label{eq:bound_r1_comp}
\end{align}
Observe that the last inequality implies that \eqref{eq:bound_r1_noreg} holds w.h.p., since, in any bipartite graph, we have $\Deltamax(\sS) \geq \Deltamin(\sC) \geq \eta \log^2 n$. Finally, from \eqref{def:Kt_noreg},
\eqref{eq:bound_r1_comp} and from an union bound,  we get  \eqref{eq:bound_K1_noreg}.
\end{proof}

  For each $v \in \sC$ to $r_t(N(v))$, we give an upper bound on $\Expcc{r_t(N(v))}$
  conditional  to   
  some fixed upper bounds on $K_1,\dots,K_{t-1}$. 
\begin{lemma}[Round $t \geq 2$ by induction]
\label{lem:chernoff_noreg}
Let $v \in \sC$ and $k_0=1$. For each choice of positive reals $k_j$ with $j=1,\dots,t-1$ and for all $c,d \geq 1$
\begin{equation}
\Expcc{r_t(N(v)) \mid K_1 \leq k_1,\dots,K_{t-1}\leq k_{t-1}}\leq d \Delta_v \frac{\Deltamax(\sS)}{\Deltamin(\sC)}\prod_{j=0}^{t-1}k_j \,.
\label{eq:bound_Ert_noreg}
\end{equation}
Moreover, for any  $\mu \geq d \Delta_v \cdot (\Deltamax(\sS) /\Deltamin(\sC)) \cdot  \prod_{j=0}^{t-1}k_j,$
\begin{equation}
    \Prc{r_t(N(v)) \geq 2\mu \mid K_1\leq k_1,\dots, K_{t-1}\leq k_{t-1}} \leq e^{-\frac{\mu}{3}}\,.
\end{equation}
\end{lemma}
\begin{proof}
We can proceed as in  the proof of Lemma \ref{claim:chernoff}. The only difference  is   between   equations \eqref{eq:bound_exp} and \eqref{eq:bound_Ert_noreg}:  in the latter,  for each $v \in \sC$ and $u \in N(v)$, the $z_t^{(i)}(v,u)$ are Bernoulli random variables of parameter $1/\Delta_v$. 
\end{proof}

\smallskip
\noindent
\textbf{Wrapping up: Process Analysis in Two Time Stages.}
Lemmas \ref{lem:firstround_noreg} and \ref{lem:chernoff_noreg} (similarly to  Lemmas \ref{lem:firstround} and \ref{claim:chernoff} for the regular case) provide the decreasing rate of the number of alive balls in any fixed $N(v)$, conditioning on the events $"K_j \leq k_j"$, for a generic sequence $k_j$ ($j=1,\dots,t-1$). Now, for the same reasons explained in Subsection \ref{ssec:lemma_proof}, our analysis is organized  in two time stages. In the first phase there is a strong decreasing of $r_t(N(v))$, while in the second phase, which we show it lasts $\bigO(\log n)$ rounds,  our goal is to prove that the fraction of burned nodes in each neighborhood of $v \in \sC$ keeps bounded by some constant $<1$.

As for   the first stage, we consider the sequence $\{\gamma_t'\}_{t \in \mathbb{N}}$ defined by the following recurrence
\begin{equation}
\left\{ \begin{array}{ll}
\gamma'_0=1 \\
\gamma'_t=\frac{2}{c} \frac{\Deltamax(\sS)}{\Deltamin(\sC)}\sum_{i=1}^t \prod_{j=0}^{i-1}\gamma_j'
\end{array}
\right.
\label{def:gamma1}
\end{equation}
This sequence plays the same role as the sequence $\{\gamma_t\}_{t \in \mathbb{N}}$ in   Subsection \ref{ssec:lemma_proof}. Since, by hypothesis  $\Deltamax(\sS)/\Deltamin(\sC) \leq \rho$, the sequence   $\{\gamma_t'\}_{t \in \mathbb{N}}$ has the same properties of $\{\gamma_t\}_{t \in \mathbb{N}}$ described in Lemma \ref{claim:recurrence}. Indeed,  since  $\Deltamax(\sS)/\Deltamin(\sC) \leq \rho= O(1)$, we can take a constant $\alpha$ such that
\[\frac{2}{c}\frac{\Deltamax(\sS)}{\Deltamin(\sC)} \leq \frac{2\rho}{c}\leq \frac{1}{\alpha^2}.\]
In particular, in the analysis we will take  $c \geq 32 \rho$ in order to have $\gamma_t' \leq 1/4$ (see Lemma \ref{claim:recurrence}).

\begin{lemma}[Stage I: Fast decreasing of the active balls]
\label{lem:caseI_noreg} For any $c \geq 32 \rho$  (with $\rho \geq \Deltamax(\sS)/\Deltamin(\sC)$) and for any  sufficiently large $n$,  an integer $T= \bigO\left(\log(d\Deltamax(\sS)/\log n)\right)$ exists such that, for each $0 \leq t < T$,
\begin{align}
\label{eq:caseI.II_noreg}
  \mathbf{Pr}\bigr(\bigcap_{v \in \sC} \bigr\{r_t(N(v)) \leq 2d \Delta_v \frac{\Deltamax(\sS)}{\Deltamin(\sC)} \prod_{j=0}^{t-1}\gamma_j' \bigr\}  \mid K_1 \leq \gamma_1',\dots,K_{t-1}\leq \gamma_{t-1} \bigr) \geq 1-\frac{1}{n^3}
\end{align}
and
\begin{equation}
\label{eq:caseI.I_noreg}
    \Prc{K_t \leq \gamma_t' \mid K_1 \leq \gamma_1',\dots,K_{t-1}\leq \gamma_{t-1}'}\geq 1-\frac{1}{n^3}\,.
\end{equation}

\end{lemma}
\begin{proof}
From Lemma \ref{lem:chernoff_noreg},  we fix  
\[\mu=d\Delta_v \frac{\Deltamax(\sS)}{\Deltamin(\sC)}\prod_{j=0}^{t-1}\gamma_j'\] and, from the  union bound over all     $v \in \sC$,  we get  
\begin{align}
    &\mathbf{Pr} \bigr( \bigcap_{v \in \sC}\bigr\{ r_t(N(v)) \leq 2 \Delta_v d \frac{\Deltamax(\sS)}{\Deltamin(\sC)} \prod_{j=0}^{t-1}\gamma_j' \bigr\}\mid K_1 \leq \gamma_1',\dots, K_{t-1} \leq \gamma_{t-1}'\bigr) \notag \\
    &\geq  \,1-ne^{-\frac{1}{3}d\Delta_v \frac{\Deltamax(\sS)}{\Deltamin(\sC)}\prod_{j=0}^{t-1}\gamma_j'} \geq 1-ne^{-\frac{1}{3}d \Deltamax(\sS)\prod_{j=0}^{t-1}\gamma_j'} \,,
    \label{eq:high_prob_noreg}
\end{align}
where the last inequality in  \eqref{eq:high_prob_noreg} follows from the definition of $\gamma_t'$ in \eqref{def:gamma1} and from \eqref{def:kt_noreg_prop}, the latter stating that,  for each $v \in \sS$, 
\[ K_t(v) \leq K_{t-1}+(1/cd\Delta_v)r_t(N(v)) \, . \]
Similarly to the regular case, we must verify when \eqref{eq:high_prob_noreg} is a high probability. First of all, we recall that the behaviour of $\gamma_t'$ is the same of $\gamma_t$, since $\Deltamax(\sS)/\Deltamin(\sC) \leq \rho$ where $\rho$ is some constant. So, from Lemma \ref{claim:recurrence}, we can take $T \geq 1$ as the smallest  integer for which
\begin{equation}
    \label{eq:condition_T_noreg}
    d \Deltamax(\sS)\prod_{j=0}^{T-1}\gamma_j'\leq 12 \log n \,,
\end{equation}
and, hence, 
\begin{equation}
    \label{eq:condition_T_noreg_substitute}
    d \Deltamax(\sS)\prod_{j=0}^{t-1}\gamma_j'> 12 \log n \quad \text{for each $t <T$.}
\end{equation}
Moreover, again from Lemma \ref{claim:recurrence}, if we take $c \geq 32 \rho$, then $\prod_{j=0}^{T-1}\gamma_j' \leq (1/4)^T$ and so, from \eqref{eq:condition_T_noreg}, we can say that $T$ verifies
\begin{equation}
    T \leq \frac{1}{2} \log \frac{d \Deltamax(\sS)}{12 \log n} \, .
\end{equation}
Finally, by using \eqref{eq:condition_T_noreg_substitute} in \eqref{eq:high_prob_noreg}, we get \eqref{eq:caseI.I_noreg} and \eqref{eq:caseI.II_noreg} for each $t <T$.
\end{proof}

\begin{lemma}[Stage II: The fraction of burned servers keeps small] 
\label{lem:caseII_noreg}  For any $c \geq \max(32\rho,288/(\eta d))$ and for a sufficiently large $n$, there exists $T \geq 1$ (it can be the same stated in the previous lemma) such that, for each $t$ in the range  $[T , \ldots,  3 \log n]$,
\begin{align}
\label{eq:caseII_noreg}
    \mathbf{Pr} \bigr(K_t \leq \delta_t' \mid K_1 \leq \gamma_1',\dots,K_{T-1}\leq \gamma_{T-1}',K_T \leq \delta_T',\dots,K_{t-1}\leq \delta_{t-1}' \bigr) \geq 1-\frac{1}{n^3},
\end{align} 
where $\gamma_t'$ is defined in \eqref{def:gamma1} and $\delta_t'$ is defined by the recurrence
\begin{equation}
    \delta_t'=\frac{1}{4}+\frac{24 t \log n}{cd \Deltamin(\sC)}, \text{ for } t \geq T\,.
    \label{def:delta1}
\end{equation}
\end{lemma}
\begin{proof}
As in the proof of the previous lemma, we take $T$ as the first integer such that
\begin{equation}
    \label{eq:cond_T}
    d \Deltamax(\sS) \prod_{j=0}^{T-1}\gamma_j' \leq 12 \log n \,.
\end{equation}
Observe first that, for each $t \leq 3 \log n$ and for $c \geq 288/(d \eta)$, we have that $\delta_t' \leq 1/2$. So, for each $t$ s.t. $T \leq t \leq 3 \log n$, \eqref{eq:cond_T} and Lemma \ref{lem:chernoff_noreg} imply that
\begin{align}
    &\Expcc{r_t(N(v)) \mid K_1 \leq \gamma_1',\dots,K_T \leq \gamma_T',\dots,K_{t-1}\leq \delta_{t-1}'} 
    \leq d \Delta_v \frac{\Deltamax(\sS)}{\Deltamin(\sC)}\prod_{j=0}^{T-1}\gamma_j' \prod_{i=T}^{t-1}\delta_i' \\&\leq d \Delta_v \frac{\Deltamax(\sC)}{\Deltamin(\sS)} \prod_{j=0}^{T-1}\gamma_j' \leq 12 \log n\,.
\end{align}
Taking $\mu=12 \log n$, for Lemma \ref{lem:chernoff_noreg}, \eqref{eq:cond_T} and, by a union bound over all the clients $v \in \sC$, 
\begin{align}
    & \mathbf{Pr} \bigr( \bigcap_{v \in \sC}\bigr\{r_t(N(v)) \leq \frac{12\Delta_v}{\Deltamin(\sC)} \log n \bigr\}  \mid K_1,\dots,\gamma_1',K_{T-1}\leq \gamma_{T-1}',\dots,K_{T}\leq \delta_{T}',\dots,K_{t-1}\leq \delta_{t-1}' \bigr) \\ &\geq 1-\frac{1}{n^3}\,.
     \label{eq:bound_r_t(N(v))_hp_noreg}
\end{align}
\eqref{eq:caseII_noreg} follows from \eqref{eq:bound_r_t(N(v))_hp_noreg}, from the definition of $\delta_t'$ in \eqref{def:delta1} and from \eqref{def:kt_noreg_prop}.
\end{proof}

Lemma \ref{lem:caseI_noreg} and \ref{lem:caseII_noreg} imply Lemma \ref{lem:bound_burned_alm_reg}. Indeed, for the chain rule, taking $T'= \lfloor 3 \log n \rfloor$ and $c \geq \max(32 \rho, 288/(\eta d))$ we get
\begin{align}
    &\Prc{\cap_{t=1}^{T-1}\{K_t \leq \gamma_t'\}\bigcap \cap_{t=T}^{T'} \{K_t \leq \delta_t'\}}= 
    \left(1-\frac{1}{n^3}\right)^{T'} \geq 1-T'\frac{1}{n^3} \geq 1-\frac{1}{n^2}\,.
    \label{high_probability_noreg}
\end{align}
where in the first inequality of \eqref{high_probability_noreg} we used the chain rule, Lemma \ref{lem:caseI_noreg} and \ref{lem:caseII_noreg} while the second last inequality of \eqref{high_probability_noreg} follows from the binomial inequality, i.e., for each $x \geq -1$ and for each $m \in \mathbb{N}$,$(1+x)^m \geq 1+mx$. 
Concluding, we have shown that $K_t \leq \gamma_t'$ for each $t \leq T$, and $K_t \leq \delta_t'$ for all $t$ such that $T \leq t \leq 3 \log n$, with probability at least $1-1/n^2$. So, since $S_t \leq K_t$ and $c \geq \max(32\rho,288/(\eta d)$,   from \eqref{def:delta1} and Lemma \ref{claim:recurrence}, with probability at least $1-1/n^2$, 
$S_t \leq 1/2$ for all $t$ such that $t \leq 3 \log n$.
\end{proof}

%%
%% The next two lines define the bibliography style to be used, and
%% the bibliography file.

\end{document}